\DeclareMathAlphabet{\mathpzc}{OT1}{pzc}{m}{it}
\DeclareMathAlphabet\EuFrak{U}{euf}{m}{n}	
\SetMathAlphabet\EuFrak{bold}{U}{euf}{b}{n}	
\newcommand{\wa}{\widehat}
\newcommand{\Aut}{ {\bf aut} }
\newcommand{\Div}{ {\bf div} }
\newcommand{\bs}{\boldsymbol}
\newcommand{\bsh}{{\boldsymbol h}}
\newcommand{\bsx}{{\boldsymbol x}}
\newcommand{\bsy}{{\boldsymbol y}}
\newcommand{\bsa}{{\boldsymbol a}}
\newcommand{\bsg}{{\boldsymbol g}}
\newcommand{\bsk}{{\boldsymbol k}}
\newcommand{\bsi}{{\boldsymbol i}}
\newcommand{\bsj}{{\boldsymbol j}}
\newcommand{\bsm}{{\boldsymbol m}}
\newcommand{\bsf}{{\boldsymbol f}}
\newcommand{\bsv}{{\boldsymbol v}}
\newcommand{\bmu}{{\boldsymbol \mu}}
\newcommand{\bsdelta}{{\boldsymbol \delta}}
\newcommand{\bsA}{{\boldsymbol A}}
\newcommand{\dbsA}{\dot{\bsA_{ \, }}}
\newcommand{\bsE}{{\boldsymbol E}}
\newcommand{\bI} {{\mathbb{I}}}
\newcommand{\bC} {{\mathbb C}}
\newcommand{\bR} {{\mathbb R}}
\newcommand{\bU} {{\mathbb U}}
\newcommand{\bSO} {{\mathbb{SO}}}
\newcommand{\bZ} {{\mathbb Z}}
\newcommand{\bN} {{\mathbb N}}
\newcommand{\iGamma}{{\it{\Gamma}}}
\newcommand{\mA}{\mathcal A}
\newcommand{\mB}{\mathcal B}
\newcommand{\mC}{\mathcal C}
\newcommand{\mD}{\mathcal D}
\newcommand{\mF}{\mathcal F}
\newcommand{\mH}{\mathcal H}
\newcommand{\mL}{\mathcal L}
\newcommand{\mS}{\mathcal S}
\newcommand{\mU}{\mathcal U}
\newcommand{\mW}{\mathcal W}
\newcommand{\efh}{\EuFrak h}
\newcommand{\efB}{\EuFrak{B}}
\newcommand{\efF}{\EuFrak{F}}
\newcommand{\efH}{\EuFrak{H}}
\newcommand{\sG}{\mathscr{G}}
\newcommand{\lb}{{\left\langle \right.}}
\newcommand{\rb}{{\left. \right\rangle}}
\newcommand{\Dom}{ \, {\mathrm{Dom}} \, }
\newcommand{\supp}{{\mathrm{supp}} \, }
\newtheorem{thm}{Theorem}[section]
\newtheorem{lem}[thm]{Lemma}
\newtheorem{prop}[thm]{Proposition}
\theoremstyle{definition}
\newtheorem{ex}{Example}[section]
\theoremstyle{remark}
\numberwithin{equation}{section}
\begin{document}

\author{\textsc{Ezio Vasselli} \\
\small{Dipartimento di Matematica, Universit\`a di Roma ``Tor Vergata'',}\\
\small{Via della Ricerca Scientifica, I-00133 Roma,  Italy.}  \\
\small{\texttt{ ezio.vasselli@gmail.com }}\\[20pt]
}

\title{Twisting factors\\and\\fixed-time models in quantum field theory}
\maketitle

\begin{abstract} 
We construct a class of fixed-time models in which the commutations relations 
of a Dirac field with a bosonic field are non-trivial and 
depend on the choice of a given distribution ("twisting factor"). 
If the twisting factor is fundamental solution of a differential operator, 
then applying the differential operator to the bosonic field yields
a generator of the local gauge transformations of the Dirac field.
Charged vectors generated by the Dirac field define states
of the bosonic field which in general are not local
excitations of the given reference state.
The Hamiltonian density of the bosonic field presents a non-trivial interaction term:
besides creating and annihilating bosons, it acts on momenta of fermionic wave functions.
When the twisting factor is the Coulomb potential,
the bosonic field contributes to the divergence of an electric field
and its Laplacian generates local gauge transformations of the Dirac field.
In this way we get a fixed-time model fulfilling the equal time commutation relations
of the interacting Coulomb gauge. 

\medskip

\noindent 
{\bf Mathematics Subject Classification.} 81T05, 81T08.  \\
{\bf Keywords.} quantum field theory, commutation relations, Coulomb gauge

\end{abstract}


\section{Introduction}
\label{sec.intro}

A typical class of models in quantum field theory is given by those formed 
by a Dirac (electron) field $\psi$ and a bosonic field, 
say $\varphi^\lambda$ for future convenience, 
fulfilling the equal time commutation relations
\begin{equation}
\label{eq.intro.2}
[ \varphi^\lambda(\bsx,t) , \psi(\bsy,t) ] 
\ = \ 
- \sigma(\bsy - \bsx) \psi(\bsy,t) 
\, .
\end{equation}
In the previous expression
$\sigma$ is a distribution that evidently measures the obstruction for $\psi$ and 
$\varphi^\lambda$ to commute. Moreover, in dependence on the support of $\sigma$, 
it also yields an obstruction for them being relatively local.
Just to mention two examples, we encounter commutation relations of the type (\ref{eq.intro.2})
in the derivative coupling model \cite[\S 4.6.3]{Strocchi} (where $\sigma$ is the Dirac delta),
and in the Coulomb gauge (QED), with $\sigma$ the Coulomb potential
\cite[\S 7.8.1]{Strocchi}.

\medskip 

Now, a way to produce models where (\ref{eq.intro.2}) holds is to start from commuting free fields
$\psi$ and $\varphi$ and then perform a Wick product of the type $:e^{-i\varphi(x)}:\psi(x)$,
as in the derivative coupling model.
Nevertheless this method does not allow to escape from the Borchers class of the initial model,
so we do not really gain additional structural informations.
A further point is that when (\ref{eq.intro.2}) involves non-local field systems,
as in the Coulomb gauge, important tools of local Wightman fields are not available, 
or at least problematic to handle (spin-statistic theorem, perturbative series, renormalization).

\medskip 

In the present paper we propose a different approach to obtain commutation relations 
of the type (\ref{eq.intro.2}), based on operator algebraic techniques. 
Starting from a fixed time field system $\mF = \{ \psi , \varphi \}$
formed by commuting free fields, 
we construct a new field system $\mF^\lambda = \{ \psi , \varphi^\lambda \}$.
The field $\varphi^\lambda$, fulfilling (\ref{eq.intro.2}), does not act merely on the bosonic Fock space,
and $\mF^\lambda$ is not relatively local to $\mF$ when $\sigma$ has support with non-empty interior.
In this way typically we obtain non-local field systems, thus the task of 
exhibiting local observables encoding the informations necessary to reconstruct
$\mF^\lambda$ arises. 
We will approach this problem in a relativistic setting 
by making use of current and charge density fields \cite{Vas3}, 
with the idea of taking as model the interacting Coulomb gauge whose 
equal time commutation relations are easily realized by one of our models 
\S \ref{sec.syn}. 
In the present paper we make a preliminary step towards this direction,
by studying the charged states of $\varphi^\lambda$ induced by states of $\psi$.

\medskip 

In our hypothesis, we may assume that $\varphi$ is defined by a regular representation
of the Weyl algebra $\mW$ associated with the symplectic space $\mS$ of test functions 
of the type $s = (s_0,s_1) \in \mS(\bR^3) \oplus \mS(\bR^3)$ and that,
analogously, $\psi$ arises from a representation of the CAR algebra
$\mC_\bsh$ where 
$\bsh \doteq L^2(\bR^3,\hat{\bC}^4)$, $\hat{\bC}^4 \doteq \bC^4 \oplus \bC^{4,*}$,
is the Hilbert space of Dirac (electron-positron) spinors at a fixed time.
Starting from these C*-algebraic structures, 
our main ingredient is the distribution $\sigma \in \mS'(\bR^3)$
that we use to construct a morphism
\[
u_\sigma : \mS \to \mU(\bsh) \, ,
\]
where $\mS$ is intended as an additive group. Such an action is defined
by means of the exponential of the convolution $\sigma \star s_0 \in C^\infty(\bR^3)$,
and allows to construct what we call a \emph{fermionic representation} of the Weyl algebra,
with "twisted Weyl unitaries" $W^\lambda(s)$ acting non-trivially on the fermionic Hilbert space.
The adjoint action of the Weyl unitaries induces automorphisms of the CAR algebra,
given by
\begin{equation}
\label{eq.intro.1}
W^\lambda(s) \, \psi(w) \, W^\lambda(-s)  
\ = \ 
\psi(e^{-i \sigma \star s_0}w)
\, ,
\end{equation}
where $s \in \mS$ and $w \in \bsh_+ \doteq L^2(\bR^3,\bC^4)$
{\footnote{
The relations (\ref{eq.intro.1}) are an example of semi-direct product algebra
in the sense of \cite{Her96,Her98}. In the above references, a Weyl algebra 
relative to an asymptotic electromagnetic field is considered. 
Here we are interested to different Weyl algebras, at a fixed time and with 
a different interpretation.
}}.
Provided that the fermionic representation of the Weyl algebra is regular, 
we get the bosonic field $\varphi^\lambda$ fulfilling the commutation relations
(\ref{eq.intro.2}), that are therefore interpreted as the infinitesimal version of (\ref{eq.intro.1}).

\medskip 

The techniques used to construct $\varphi^\lambda$ are not new, 
in fact they simply involve the fermionic 2nd quantization of
a selfadjoint operator on $\bsh$ defined by the twisting factor.
Yet the so-obtained class of models exhibits properties that,
at author's knowledge, have not been previously remarked. 

First, the Weyl algebra acquires a family of states
labelled by fermionic vectors. The corresponding GNS representations are given 
by the action of the twisted Weyl operators on the cyclic subspaces generated by vectors of the
type $\Omega^q_f \otimes \Omega_b$, where $\Omega^q_f$ is a vector in the 
fermionic Fock space of $\bsh$ with charge $q \in \bZ$ and $\Omega_b$ is the reference state of $\varphi$.
Computing the expectation values of $\varphi^\lambda$ with respect to one of the above
states, we obtain -- for different $m \in \bN$ -- 
non-trivial transition amplitudes between the Hilbert spaces
$\mH^\lambda_m$ of $m$-fold excitations of $\Omega^q_f \otimes \Omega_b$, 
in contrast to the untwisted field $\varphi$.
A further analysis of the expectation values shows that our states in general 
cannot be interpreted in terms of automorphisms of the Weyl algebra
induced by external (c-number) potentials.

Secondly, provided that $\sigma$ is fundamental solution of 
a differential operator $P_\partial$, defining the field 
$\varrho \doteq P_\partial \varphi^\lambda$ we get the commutation relations
\[
[ \varrho(s) \, , \, \psi(w) ] \ = \ - \psi(s_0 w) \, .
\]
Thus $\varrho(s)$ is a generator of the local gauge transformations
of the Dirac field,
and behaves like the electron-positron charge for all spinors $w$ 
such that $s_0 \restriction \supp(w) = 1$.
The field $\varrho$ is relatively local to $\psi$
independently of the support of $\sigma$, and the vector states defined by
$\Omega^q_f \otimes \Omega_b$, when restricted to the subalgebra generated by
$W(P_\partial s) \in \mW$, $s \in \mS$, are localized on the support of $\Omega^q_f$ (regarded as a function with values in some antisymmetric 
tensor power of $\hat{\bC}^4$).

\medskip 

These results are proved in the following \S \ref{sec.A} 
and resumed in our main Theorem \ref{sec.syn}.
We provide examples for $\sigma$ the Lebesgue measure, the Yukawa potential
and, finally, the Coulomb potential.
In the latter case we construct an electric field and get the
equal time commutation relations 
of the interacting Coulomb gauge, with $\bs\Delta \varphi^\lambda$ contributing to 
the charge density field; we stress that in this case the observables are manifest,
being given by the electric field and gauge-invariant combinations of the Dirac field (e.g. the free Dirac Hamiltonian \cite[\S 2]{Bon70}).

\medskip 

In the final \S \ref{sec.H} we perform an analysis in momentum space and give 
an expression for the Hamiltonian of $\varphi^\lambda$, that presents 
interaction terms acting both on the bosonic and fermionic factor of the Fock 
space: in particular, it induces an exchange of momenta between fermionic and 
bosonic states. 
This is useful to get a hint of the form that our models may take in the
relativistic setting. The analysis of this latter scenario, as well as the 
necessary discussion of Einstein causality for the observables, 
are postponed to \cite{Vas3}.

\medskip

Finally we mention that C*-algebraic aspects of our construction are investigated
in the companion paper \cite{Vas1}. There, a C*-algebra $\bs\mF$ generated by the symbols
$\psi(w)$ and elements of a given C*-algebra $\mA$ is defined,
with relations generalizing (\ref{eq.intro.1}).
The construction involves creation and annihilation operators on a suitable
fermionic Fock bimodule over $\mA$.
When $\mA$ is a Weyl algebra, $\bs\mF$ is a semidirect product algebra 
of the type \cite{Her96,Her98}, and the fields constructed in the present paper define
regular representations of $\bs\mF$.

\section{The models}
\label{sec.A}

In the present section we present our family of static models, firstly by considering 
generic states of the Weyl algebra and then passing to regular states (hence to bosonic fields).
In GNS representations, the main point is that the Weyl algebra acts 
non-trivially on the fermionic factor of the Hilbert space,
and as a consequence it gains a family of states labeled by the electron-positron charge. 
This point shall be evident in the Examples \S \ref{sec.syn}.

%

\paragraph{Stone generator of the twisting factor.}
We start by considering the symplectic space $\mS \doteq \mS(\bR^3) \oplus \mS(\bR^3)$ 
of compactly supported, real test functions with symplectic form 
\begin{equation}
\label{eq.00.i1}
\eta(s,s') \, \doteq \, \int ( s_1 s'_0 - s_0 s'_1 ) 
\ \ \ , \ \ \ 
s=(s_0,s_1),s' \in \mS
\, .
\end{equation}
We then define the Hilbert spaces
$\bsh_+ \doteq L^2(\bR^3,\bC^4)$, $\bsh_- \doteq L^2(\bR^3,\bC^{4,*})$
and $\bsh \doteq \bsh_+ \oplus \bsh_-$ endowed with the standard scalar product 
\cite[Eq.13]{Bon70}. On $\bsh$ we define the conjugation
\begin{equation}
\label{eq.00.i2}
\kappa (w_+ \oplus \bar{w}_-) \, \doteq \, w_- \oplus \bar{w}_+
\ \ \ , \ \ \ 
w_+ , w_- \in \bsh_+ \, .
\end{equation}
In the previous expression, $\bar{w}_-$ is the $\bC^{4,*}$-valued map 
defined by the conjugate of $w_-$. 
We interpret $\bsh_+$ as the particle (electron) space, and 
$\bsh_-$ as the antiparticle (positron) space. 
Given the tempered distribution $\sigma \in \mS'(\bR^3)$ 
we have $\sigma \star s_0 \in C^\infty(\bR^3)$ for all $s \in \mS$, 
and we define the unitary-valued mapping 
$u_{\sigma,s} \in C^\infty(\bR^3,\bU(\hat{\bC}^4))$,
\begin{equation}
\label{eq.lem.A.1'}
u_{\sigma,s}(\bsx) \, \doteq \, 
\left(
\begin{array}{cc}
e^{-i\sigma \star s_0(\bsx)} {\bf 1}_4 \bf & 0 \\
0 &  e^{i\sigma \star s_0(\bsx)} {\bf 1}_4
\end{array}
\right)
\, ,
\end{equation}
where ${\bf 1}_4$ is the $4 \times 4$ identity matrix.
Applying $u_{\sigma,s}$ to spinors $w \in \bsh$ we obtain a unitary operator,
and this yields the group morphism
$u_\sigma : \mS \to \mU(\bsh)$, $s \mapsto u_{\sigma,s}$,
that we call the \emph{twist}. 
$u_\sigma$ is regular: given $s \in \mS$ and $w \in \bsh$,
we have
\begin{equation}
\label{eq.lem.A.1}
\lim_{t \to 0} 
\int | u_{\sigma,ts}(\bsx) w(\bsx) - w(\bsx) |^2 d^3\bsx 
\, = \, 
0 \ ,
\end{equation}
thus we may write $u_{\sigma,s} = e^{i\bmu_{\sigma,s}}$
where the Stone generator $\bmu_{\sigma,s}$ can be written explicitly   
\begin{equation}
\label{eq.00.08}
\bmu_{\sigma,s}w
\, = \, 
(- \sigma \star s_0 \cdot w_+) \oplus ( \sigma \star s_0 \cdot \bar{w}_- )
\end{equation}
on Dirac spinors $w$ such that $\sigma \star s_0 \cdot w \in L^2(\bR^3,\hat{\bC}^4)$.
Spinors $w \in C^\infty_c(\bR^3,\hat{\bC}^4)$ yield a common dense domain for all 
$\bmu_{\sigma,s}$ at varying of $s \in \mS$.
%
%
%
%
%
%
%
%
%
%

\begin{lem}
\label{lem.A.2}
The mapping $s \mapsto \bmu_{\sigma,s}$ is continuous in the Schwartz topology,
in the sense that for each $w,w' \in C^\infty_c(\bR^3,\hat{\bC}^4)$, the scalar products
\begin{equation}
\label{eq.00.14}
\int \lb w(\bsx) \, , \,\mp \sigma \star s_0(\bsx) w'(\bsx) \rb d^3 \bsx
\end{equation}
define tempered distributions in the argument $s_0 \in \mS(\bR^3)$.
\end{lem}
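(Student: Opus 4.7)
The plan is to reduce the claim to the standard continuity of convolution with a tempered distribution. Fix $w,w' \in C^\infty_c(\bR^3,\hat{\bC}^4)$ and define the scalar function
$$
f(\bsx) \ := \ \langle w(\bsx),w'(\bsx)\rangle_{\hat{\bC}^4}\,,
$$
which lies in $C^\infty_c(\bR^3)\subset \mS(\bR^3)$. Up to the overall sign, the integral (\ref{eq.00.14}) is then
$$
I(s_0) \ = \ \int (\sigma \star s_0)(\bsx)\, f(\bsx)\, d^3\bsx\,,
$$
and the content of the lemma is the continuity of the linear functional $I:\mS(\bR^3)\to\bC$.

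Next, I would invoke the standard duality identity for convolution against a tempered distribution. Writing $\check{s}_0(\bsx):=s_0(-\bsx)$, one has $\check{s}_0\star f\in\mS(\bR^3)$, and the definition $(\sigma\star s_0)(\bsx)=\langle \sigma_{\bsy},s_0(\bsx-\bsy)\rangle$ together with Fubini yields
$$
I(s_0) \ = \ \langle \sigma,\, \check{s}_0\star f\rangle\,.
$$
This rewriting is the key step: $I$ is now exhibited as the composition of $s_0\mapsto \check{s}_0\star f$, which is continuous $\mS(\bR^3)\to\mS(\bR^3)$ because $f\in\mS$, with the action of $\sigma\in\mS'(\bR^3)$, which is by definition continuous on $\mS(\bR^3)$. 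Hence $I$ is a tempered distribution in $s_0$, as required.

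The only genuinely nontrivial point is the Fubini/duality step $\int(\sigma\star s_0)f=\langle\sigma,\check{s}_0\star f\rangle$, but this is a well known consequence of $\sigma\star s_0\in\mO_M(\bR^3)$ and can be obtained either from the distributional definition of convolution, or by approximating $\sigma$ in $\mS'$ by a sequence of smooth functions of rapid decrease and passing to the limit on both sides. All remaining verifications (compact support of $f$, continuity of $s\mapsto \check{s}\star f$ in Schwartz topology) are routine and need no separate treatment.
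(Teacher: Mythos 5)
Your proof is correct, but it takes a genuinely different route from the paper. The paper's proof is a direct estimate: it invokes the seminorm characterization of $\sigma\in\mS'(\bR^3)$ (i.e.\ $|\sigma(f)|\leq c\sum_{|\alpha|\leq m}\|(1+q_m)\partial^\alpha f\|_\infty$), pulls the evaluation $\sigma\star s_0(\bsx)=\sigma(s_0^\bsx)$ inside the integral over the compact support $K$ of $\lb w,w'\rb$, maximizes $|\sigma(s_0^\bsx)|$ over $\bsx\in K$, and then translates the resulting seminorm of $s_0^\bsa$ back to a seminorm of $s_0$ by bounding the ratio $(1+q_m)/(1+q_m^\bsa)$ uniformly over $\bsa\in K$. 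The payoff is an explicit, self-contained seminorm bound $|\zeta_\mp(s_0)|\leq C\sum_{|\alpha|\leq m}\|(1+q_m)\partial^\alpha s_0\|_\infty$ that never appeals to the continuity of convolution on Schwartz space.

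Your argument instead factors the functional through the duality identity $I(s_0)=\langle\sigma,\check{s}_0\star f\rangle$, $f=\lb w,w'\rb\in C^\infty_c$, and then reads off continuity as the composition of the continuous bilinear map $s_0\mapsto\check{s}_0\star f$ from $\mS\to\mS$ with the continuous linear functional $\sigma$. This is more abstract and more economical, at the cost of outsourcing the two standard facts you cite (the duality/Fubini step for tempered distributions and the continuity of convolution $\mS\times\mS\to\mS$) to the general theory. Both routes are sound; the paper's version has the advantage of producing the seminorm order $m$ of $\zeta_\mp$ explicitly in terms of that of $\sigma$, which your composition argument does not exhibit. One small caveat: the paper uses the (slightly nonstandard) convention $s_0^\bsx(\bsy)=s_0(\bsy-\bsx)$, so with that convention the dual pairing becomes $\langle\sigma,s_0\star f\rangle$ rather than $\langle\sigma,\check{s}_0\star f\rangle$; this affects only the reflection, not the validity of the argument.
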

\begin{proof}
We set $\zeta_\mp(s_0)$ equals (\ref{eq.00.14}),
thus the Lemma will be proved if we show that the so-obtained linear functionals are 
tempered distributions.
Given $\bsx \in \bR^3$, we consider the compactly supported function
$s_0^\bsx \in \mS(\bR^3)$, $s_0^\bsx(\bsy) \doteq s_0(\bsy-\bsx)$.
It yields the explicit expression for the convolution
$\sigma \star s_0 (\bsx) \doteq \sigma(s_0^\bsx)$,
$\bsx \in \bR^3$,
that is a $C^\infty$ function \cite[Chap.9]{Con}.
Given a compact set $K \subset \bR^3$,
the modulus $| \sigma(s_0^\bsx) |$ has its maximum on some $\bsa \in K$.
In the rest of the proof we take as $K$ the closed support of $\lb w,w' \rb \in C_c^\infty(\bR^4,\bC)$.
Now, the continuity condition for $\sigma$ says that there are $c > 0$, $m \in \bN$
such that
\[
| \sigma(f) | \, \leq \, c \sum_{|\alpha| \leq m} \| (1+q_m) \partial^\alpha f \|_\infty 
\ \ \ , \ \ \ 
\forall f \in \mS(\bR^3)
\, ,
\]
where $q_m(\bsy) \doteq | \bsy |^m$ and $\alpha$ are multi-indices labelling the
partial derivatives $\partial^\alpha$.
We can now estimate
\begin{align}
| \zeta_\mp(s_0) | & = \ 
\left| \int \lb w(\bsx) , w'(\bsx) \rb \, \sigma \star s_0(\bsx) \, d^3\bsx \right| \\
& \leq \| \lb w,w' \rb \|_\infty 
    \int_K |\sigma\star s_0(\bsx)| \, d^3\bsx  \\ 
& = \| \lb w,w' \rb \|_\infty 
    \int_K | \, \sigma(s_0^\bsx) \, | \, d^3\bsx  \\ 
& \leq \| \lb w,w' \rb \|_\infty 
    {\mathrm{vol}}(K) \, | \, \sigma(s_0^\bsa) \, |  \\ 
& \leq c \, {\mathrm{vol}}(K) \| \lb w,w' \rb \|_\infty \cdot 
    \sum_{|\alpha| \leq m} \| (1+q_m) \partial^\alpha s_0^\bsa \|_\infty \\ 
& = c \, {\mathrm{vol}}(K) \| \lb w,w' \rb \|_\infty \cdot 
    \sum_{|\alpha| \leq m} 
    \left\| \frac{1+q_m}{1+q_m^\bsa} (1+q_m^\bsa) \partial^\alpha s_0^\bsa \right\|_\infty \\
& \leq c \, {\mathrm{vol}}(K) M \| \lb w,w' \rb \|_\infty \cdot 
    \sum_{|\alpha| \leq m} 
    \left\| (1+q_m^\bsa) \partial^\alpha s_0^\bsa \right\|_\infty  \\
& \leq c \, {\mathrm{vol}}(K) M \| \lb w,w' \rb \|_\infty \cdot 
    \sum_{|\alpha| \leq m} 
    \left\| (1+q_m) \partial^\alpha s_0 \right\|_\infty 
\, ,
\end{align}
where
%
%
$q_m^\bsa(\bsy) \doteq | \bsy - \bsa |^m$ and
$M \doteq \max_{ \bsa \in K } \| (1+q_m)(1+q_m^\bsa)^{-1} \|_\infty$.
%
%
%
Note that in the definition of $M$ we passed to the max over $\bsa$ because we want
an estimate independent of $s_0$ (note, in fact, that $\bsa$ depends on $s_0$);
that $M$ is well-defined follows by the fact that $\| (1+q_m)(1+q_m^\bsa)^{-1} \|_\infty$
is continuous on $\bsa \in K$.
In conclusion, the last estimate ensures that $\zeta_\mp$ fulfils the continuity property
with respect to the Schwartz topology, and the Lemma is proved.
\end{proof}

\medskip 

By construction we have the following equalities,

\begin{equation}
\label{eq.00.11}
[ \bmu_{\sigma,s} \, , \, \kappa ]_+ \ = \ 0
\ \ , \ \ 
[ u_{\sigma,s} \, , \,  \kappa ]  \ = \ 0
\ \ \ , \ \ \ 
\forall s \in \mS 
\, ,
\end{equation}

\medskip 

\noindent 
where $[\cdot , \cdot ]_+$ is the anticommutator
{\footnote{
Note that writing $u_{\sigma,s}$ as an exponential series we find
$u_{\sigma,s} \kappa = 
\sum_h (h!)^{-1} i^h \kappa \, (-1)^h \bmu_{s,\sigma}^h = 
\kappa u_{\sigma,s}$
by antilinearity of $\kappa$,
thus the first of (\ref{eq.00.11}) implies the second one.
}}.

\medskip 

We now consider the group $\sG \doteq \bR^3 \rtimes \bSO(3)$ 
of Galilean transformations generated by space translations and rotations.
It acts both on $\bsh$ and $\mS$
\[
U_\tau w(\bsx) \doteq w_+(\tau^{-1}\bs x) \oplus \bar{w}_-(\tau^{-1}\bs x)
\ \ \ , \ \ \ 
s_\tau(\bs x) \doteq s(\tau^{-1}\bs x)
\ \ \ , \ \ \ 
\tau \in \sG \, .
\]
This yields the unitary representation $U : \sG \to \mU(\bsh)$ and we find
\begin{equation}
\label{eq.cov}
U_\tau u_{\sigma,s} \ = \ u_{\sigma_R , s_\tau} U_\tau \, , 
\end{equation}
%
%
for $\tau = (\bs a , R) \in \sG$, $s \in \mS$.
Given the Weyl algebra $\mW$ defined by $(\mS,\eta)$,
we have an action by automorphisms $\alpha : \sG \to \Aut \mW$,
\[
\alpha_\tau(W(s)) \ \doteq \ W(s_\tau)
\ \ , \ \ 
\tau \in \sG 
\, .
\]
%
%
Now, we consider the free right Hilbert module $\efh \doteq \bsh \otimes \mW$, 
on which we define the right $\mW$-module operators
\begin{equation} 
\label{eq.00.00}
W^\lambda(s) (w \otimes A) \, \doteq \, u_{\sigma,s}w \otimes W(s)A \, ,
\end{equation}
$w \in \bsh$, $s \in \mS$, $A \in \mW$. 
The operators $W^\lambda(s)$ fulfil the Weyl relations, thus they induce a *-morphism
$\lambda : \mW \to \efB(\efh)$ defining a left $\mW$-module action on $\efh$.
With the method illustrated in \cite{Vas1}, we can define the fermionic Fock bimodule 
$\efF_a(\efh)$ and the C*-algebra $\mF_\sigma$ generated by operators 
$\psi(w)$, $W^\lambda(s)$, $w \in \bsh_+$, $s \in \mS$, acting on $\efF_a(\efh)$.
On $\mF_\sigma$ we have the relations

\begin{equation}
\label{eq.00.01}
W^\lambda(s) \psi(w) \ = \ \psi( e^{-i\sigma \star s_0} w ) W^\lambda(s) \, ,
\end{equation}

\

\noindent 
besides the usual CARs and Weyl relations for $\psi(w)$ and $W^\lambda(s)$ respectively
(here $\psi(w)$, $w \in \bsh_+$, is intended as the electron field \cite{Bon70}).
It is evident by (\ref{eq.00.01}) that $\sigma$ yields an obstacle to
make $\psi(w)$ and $W^\lambda(s)$ commute, thus we may regard $\mF_\sigma$
as a twist of the tensor product (that we obtain for $\sigma = 0$) of $\mW$ by 
the CAR algebra over $\bsh$.
In the following lines we provide a direct construction of Hilbert space representations
of $\mF_\sigma$, obtained in correspondence with states of $\mW$.

\paragraph{Representations and charged states.}
Let $\omega \in \mS(\mW)$ be a state with GNS triple 
$( \bsh^\omega , \Omega_b , \pi^\omega )$;
to be concise, we write
$W^\omega(s) \equiv \pi^\omega(W(s))$.
We set $\mH^\omega \doteq \mF_a(\bsh) \otimes \bsh^\omega$,
where $\mF_a(\bsh)$ is the fermionic Fock space
{\footnote{
For notions concerning the fermionic Fock space (2nd quantized operators,
creation and annihilation operators) we refer to \cite[Vol.2, \S 5.2.1]{BR}
(there, the subscript $-$ is used instead of $a$ for fermionic objects). 
}}.
On $\mH^\omega$ act the operators
$\psi(w) \otimes \bI^\omega$, $w \in \bsh_+$,
defined by the free (electron) Dirac field \cite{Bon70}, where $\bI^\omega \in \mB(\bsh^\omega)$ is the identity.
%
%
%
We set

\medskip 

\begin{equation}
\label{eq.00.02}
W^{\lambda,\omega}(s) \, \doteq \, \iGamma_a(u_{\sigma,s}) \otimes W^\omega(s)
\ \ \ , \ \ \ 
W^{\lambda,\omega}(s) \in \mU(\mH^\omega) \, ,
\end{equation}

\medskip 

\noindent 
where $\iGamma_a(u_{\sigma,s})$ is the fermionic 2nd quantization of 
$u_{\sigma,s} \in \mU(\bsh)$.
It is easily verified that the operators $W^{\lambda,\omega}(s)$ 
fulfil the Weyl relations, so (\ref{eq.00.02}) defines a representation 
$\pi^{\lambda,\omega}$ of $\mW$ that is reduced on each $n$-Fermion subspace
$\mH^{\omega,n} \doteq (\otimes_a^n \bsh) \otimes \bsh^\omega$, $n \in \bN$.
It is now easy to check that (\ref{eq.00.01}) holds in representation.
Given $z \in \bsh$, we consider the fermionic annihilation operators
$\bsa_f(z)$,
$\lb z |  \doteq (n)^{-1/2} \bsa_f(z)$,
in particular for $z = w \oplus 0$, $w \in \bsh_+$; then we take
$w_f \in \otimes^n_a \bsh \subset \mF_a(\bsh)$,
$v_b \in \bsh^\omega$
and compute
\[ 
\begin{array}{l}
W^{\lambda,\omega}(s) \psi(w) (w_f \otimes v_b)   \, = \\ 
(2)^{-1/2} W^{\lambda,\omega}(s) 
\{ \sqrt{n+1} (w \oplus 0) \otimes_a w_f + 
\sqrt{n} \lb 0 \oplus \bar{w}) | w_f \} \otimes W^\omega(s)v_b
\, = \\
%
(2)^{-1/2} \{ 
\sqrt{n+1} (e^{-i\sigma \star s_0}w \oplus 0) \otimes_a \iGamma_a(u_{\sigma,s})w_f + 
\sqrt{n} \lb 0 \oplus e^{i\sigma \star s_0}\bar{w}) | \iGamma_a(u_{\sigma,s})w_f \} 
\otimes W^\omega(s)v_b
\, = \\
%
%
%
\psi( e^{-i\sigma \star s_0} w ) W^{\lambda,\omega}(s) (w_f \otimes W^\omega(s)v_b) \, ,
\end{array}
\]
%
%
so that
\begin{equation}
\label{eq.00.01'}
W^{\lambda,\omega}(s) \psi(w) \ = \ \psi( e^{-i\sigma \star s_0} w ) W^{\lambda,\omega}(s) \, .
\end{equation}
We now introduce the notation $\vec{q} = ( q_i )$ to indicate a finite sequence with 
entries $\pm 1$; we write $|\vec{q}| \in \bN$ for the cardinality of $\vec{q}$ 
and $q \doteq \sum_i q_i \in \bZ$.
We use the above sequences to form Hilbert spaces $\bsh^{\vec{q}} \subset \mF_a^q(\bsh)$
spanned by vectors of the type 
\begin{equation}
\label{eq.00.12'}
\Omega^q_f \ \doteq \ 
\Omega^1_f \otimes_a \ldots \otimes_a \Omega^{|\vec{q}|}_f
\, ,
\end{equation}
where $\Omega^i_f \in \bsh_{ {\mathrm{sign}} (q_i) } \subset \bsh$
%
%
for all $i$.
By construction $\Omega^q_f \in \otimes_a^{|\vec{q}|} \bsh \cap \mF_a^q(\bsh)$, 
where $\mF_a^q(\bsh)$ is the subspace of $\mF_a(\bsh)$ with charge $q$.
Assuming that $\Omega^q_f$ is normalized, 
the vector $\Omega^q \doteq \Omega^q_f \otimes \Omega_b$ defines the state

\begin{equation}
\label{eq.00.12}
\omega_{\sigma,q}(W(s)) \ \doteq \
\lb \Omega^q \, , \, W^{\lambda,\omega}(s) \Omega^q \rb \ = \
\lb \Omega^q_f \, , \, \iGamma_a(u_{\sigma,s}) \Omega^q_f \rb \, \omega(W(s))
\, .
\end{equation}

\medskip 

\noindent States of the above type may coincide with $\omega$ on specific
localization regions, depending on the supports of $\sigma$ and $\Omega^q_f$.
For example, 
\begin{itemize}
\item If $\sigma$ is the Dirac delta then $\supp (\sigma \star s_0) = \supp(s_0)$,
      so that
      $\iGamma_a(u_{\sigma,s})\Omega^q_f = \Omega^q_f$
      and 
      $\omega_{\sigma,q}(W(s)) = \omega(W(s))$ for 
      $\supp(s) \cap \supp(\Omega^q_f) = \emptyset$.
      In other words, if we take $\omega$ as a reference state of $\mW$ then $\omega_{\sigma,q}$
      is localized on the support of $\Omega^q_f$.
\item If $\supp(\sigma)$ is contained in the 3--ball $B_r$ with radius $r$ centered in the origin 
      then 
      $\supp(\sigma \star s_0) \subseteq \supp(s_0) + B_r$ and we have
      $\omega_{\sigma,q}(W(s)) = \omega(W(s))$ whenever $s_0$ is such that
      $(\supp(\Omega^q_f) + B_r) \cap \supp(s_0) = \emptyset$.
      Thus the state $\omega_{\sigma,q}$ is localized in the region around 
      $\supp(\Omega^q_f)$ with radius $r>0$.
\item Analogously, if $\supp(\sigma) = \bR^3$ then there are no regions where
      $\omega_{\sigma,q} = \omega$.
      %
      %
\end{itemize}
The GNS representations of $\omega_{\sigma,q}$ are clearly given by the restrictions
of $\pi^{\lambda,\omega}$ to the cyclic subspaces of $\Omega^q$, which have charge $q \in \bZ$,
\begin{equation}
\label{eq.00.Hilb}
\begin{array}{l}
\mH^{\lambda,\omega,\Omega^q} \ \doteq \ 
{\mathrm{span}} \, 
\{ W^{\lambda,\omega}(s) \Omega^q = \iGamma_a(u_{\sigma,s})\Omega^q_f \otimes W(s)\Omega_b \, , \, 
s \in \mS \} 
\, , \\ \\
\mH^{\lambda,\omega,\Omega^q} \, \subset \, 
\mH^{\omega,q} \doteq \mF_a^q(\bsh) \otimes \bsh^\omega \, .
\end{array}
\end{equation}

\noindent 
Note that the spaces $\mH^{\lambda,\omega,\Omega^q}$ do not have the form of 
a tensor product, because the test functions $s \in \mS$ stand simultaneously 
in both the factors. 
%
%
We write, at varying of $q \in \bZ$,
\begin{equation}
\label{eq.00.Hilb'}
\pi^{\lambda,\omega,q} \doteq \pi^{\lambda,\omega} \restriction \mH^{\omega,q} 
\ \ \ , \ \ \
\pi^{\lambda,\omega,\Omega^q} \doteq \pi^{\lambda,\omega} \restriction \mH^{\lambda,\omega,\Omega^q}
\, .
\end{equation}
Since $\psi(w)$ maps $\mF_a^q(\bsh)$ into $\mF_a^{q-1}(\bsh)$,
by (\ref{eq.00.01'}) we find
\begin{equation}
\label{eq.00.pq}
\psi(e^{-i\sigma \star s_0} w) W^{\lambda,\omega,q}(s) \ = \ 
W^{\lambda,\omega,q-1}(s) \psi(w) \, ,
\end{equation}
where $W^{\lambda,\omega,q}(s) \doteq \pi^{\lambda,\omega,q}(W(s))$.

\medskip 

We wish to understand the interplay between the $\sG$-action and $\pi^{\lambda,\omega}$.
To this end, we assume that the reference state $\omega$ is $\sG$-invariant
(for example, the Fock state), so that there is a unitary representation
$U^\omega : \sG \to \mU(\bsh^\omega)$ besides 
$U^a : \sG \to \mU(\mF_a(\bsh))$ which is defined by fermionic second quantization
$U^a_\tau \doteq \iGamma_a(U_\tau)$.
Applying (\ref{eq.cov}) we get
\[
\begin{array}{lcl}
\pi^{\lambda,\omega}(W(s_\tau)) & = &
\iGamma_a(u_{\sigma,s_\tau}) \otimes W^\omega(s_\tau) \ =  \\ & = & 
U^a_\tau \iGamma_a(u_{\sigma_{R^{-1}},s}) U^{-*}_\tau \otimes 
U^\omega_\tau W^\omega(s) U^{\omega *}_\tau \ =  \\ & = & 
V_\tau \, \pi^{\omega,\sigma_{R^{-1}}}(W(s)) \, V_\tau^*
\, ,
\end{array}
\]
having defined $V_\tau \doteq U^a_\tau \otimes U^\omega_\tau$ and 
$\pi^{\omega,\sigma_{R^{-1}}}(W(s)) \doteq \iGamma_a(u_{\sigma_{R^{-1}},s}) \otimes W^\omega(s)$.
The previous expression says that:
(1) the representations $\pi^{\lambda,\omega}$ and $\pi^{\omega,\sigma_{R^{-1}}}$ 
are unitarily equivalent for any $\tau = (\bsa,R)$, and if $\sigma$ is rotation-invariant
then $\pi^{\lambda,\omega}$ is $\sG$-covariant;
(2) for generic $\sigma$, the representation $\pi^{\lambda,\omega}$ is translation covariant,
\begin{equation}
\label{eq.00.15}
\pi^{\lambda,\omega}(W(s_\bsa)) \ = \ 
V_\bsa \, \pi^{\lambda,\omega}(W(s)) \, V_\bsa^*
\ .
\end{equation}
The same computation yields
\begin{equation}
\label{eq.00.15'}
\omega_{\sigma,q}(W(s_\tau)) 
\ = \ 
\lb U^{-*}_\tau\Omega^q_f \, , \, \iGamma_a(u_{\sigma_{R^{-1}},s}) U^{-*}_\tau\Omega^q_f \rb \, 
\omega(W(s))
\ .
\end{equation}

\paragraph{Regular representations.}
Assuming that $\pi^\omega$ is regular,
by (\ref{eq.lem.A.1}) and (\ref{eq.00.02}) we conclude that $\pi^{\lambda,\omega}$
is regular, thus we introduce the (Segal) field
$\varphi^\lambda$ defined by $W^{\lambda,\omega}(s) = e^{i\varphi^\lambda(s)}$. 
Explicitly, we find
\begin{equation}
\label{eq.00.03}
\varphi^\lambda(s) \, = \, 
\bI_f \otimes \varphi(s) + \bmu_\sigma^a(s) \otimes \bI_b  \, ,
\end{equation}
where $\varphi$ is the field defined by the regular representation $\pi^\omega$ and
\begin{equation}
\label{eq.00.bmu}
\bmu_\sigma^a : \mS \to \mL(\mD) \ \ \ , \ \ \ \mD \subset \mF_a(\bsh) \, ,
\end{equation}
is defined by $\bmu_\sigma^a(s) \doteq \iGamma_a(\bmu_{\sigma,s})$
with $\bmu_{\sigma,s}$ the Stone generator (\ref{eq.00.08});
in the above expression, $\mD$ is the subspace of antisymmetric tensor products 
of spinors $w \in C_c^\infty(\bR^3,\hat{\bC}^4)$ and $\mL(\mD)$ 
the *-algebra of linear operators from $\mD$ into $\mD$.
By Lemma \ref{lem.A.2}, $\bmu_\sigma^a$ is an operator-valued 
distribution. 
We shall refer to $\varphi^\lambda$ as the \emph{twisted bosonic field}.
By (\ref{eq.00.11}), and using the fact that $\bmu_\sigma^a$ by construction
preserves the fermionic $n$-particle space (and in particular both the 
positive and negative charge subspaces), we find
\begin{equation}
\label{eq.00.11'}
[ \bmu_\sigma^a \, , \, N_f ] \ = \ 
[ \bmu_\sigma^a \, , \, Q ] \ = \ 
[ \bmu_\sigma^a \, , \, \iGamma_a(\kappa) ]_+ \ = \ 0 \, ,
\end{equation}
where $N_f$ is the Fermion particle number operator,
$Q$ is the electron-positron charge and $\iGamma_a(\kappa)$
is the fermionic 2nd quantized of $\kappa$.

\medskip

To give an explicit expression of $\bmu_\sigma^a$, for any $n \in \bN$ and $k=1,\ldots,n$ 
we introduce the matrix
\[
\tau_{n,k}
\ \doteq \ 
\ldots {\bf 1}_8 \otimes 
\stackrel{k}{
\widehat{
\left(
\begin{array}{cc}
- {\bf 1}_4 & 0 \\
0 & {\bf 1}_4
\end{array}
\right)
}}
\otimes {\bf 1}_8 \ldots \, ,
\]
acting non-trivially only on the $k$th factor of $\otimes^n \hat{\bC}^4$;
it encodes the action of the electron-positron charge on the given factor.
%
%
Then, given 
the antisymmetric projection $P_n^a$ onto $\otimes_a^n \hat{\bC}^4$ and
$w_f = \oplus_n w_f^n \in \mD$, $w_f^n \doteq P_n^a w^n$, we write

\begin{equation}
\label{eq.00.10}
\bmu_\sigma^a(s) w_f(\vec\bsy) \ = \ 
\bigoplus_n P_n^a \sum_{k=1}^n \int 
\tau_{n,k} \, \sigma(\bsy_k - \bsx) s_0(\bsx) w_f^n(\vec\bsy) 
\, d^3\bsx
\, ,
\end{equation}

\medskip 

\noindent where
$\vec\bsy \doteq (\bsy_1,\ldots,\bsy_k,\ldots,\bsy_n) \in \bR^{3n}$,
$w_f^n \in C_c^\infty(\bR^{3n},\otimes^n \hat{\bC}^4)$ is 
antisymmetric in the arguments $\bsy_1,\ldots$, $\bsy_n \in \bR^3$.
Thus, regarded as a "field", $\bmu_\sigma^a$ is given by
%
%
\begin{equation}
\label{eq.00.10'}
\bmu_\sigma^a(\bsx)  \ = \ 
\bigoplus_n P_n^a \sum_{k=1}^n
\tau_{n,k} \, 
\sigma_{n,k,\bsx}
\, ,
\end{equation}


\noindent having defined the operator-valued distribution 
$\sigma_{n,k,\bsx} w_f^n(\vec\bsy) \doteq \sigma( \bsx + \bsy_k ) 
w_f^n(\vec\bsy)$
on the dense subspace of $\otimes^n_a \bsh$ spanned by antisymmetric functions
in $C_c^\infty(\bR^{3n},\otimes^n \hat{\bC}^4)$.

\medskip 

\begin{ex}[The Fock state]
Let $\omega(W(s)) \doteq e^{-1/4 (\| s_0 + is_1 \|_2^2)}$, $s \in \mS$,
denote the Fock state.
Then $\bsh^\omega$ is the bosonic Fock space 
$\mF_s(\bsh_0)$, $\bsh_0 \doteq L^2(\bR^3,\bC)$,
and $\varphi$ is the bosonic free field over $\bsh_0$.
If $\phi$ is the usual initial condition of the free scalar field and $\pi$
its conjugate field \cite[\S 8.4.A]{BLOT}, then
$\phi(s_0) = \varphi(s_0,0)$, $\pi(s_1) = \varphi(0,s_1)$.
%
%
%
%
%
%
%
%
This yields the Fock representation of $\mW$ over $\mF_s(\bsh_0)$.
%
%
%
%
Adding $\bmu_\sigma^a$ for an arbitrary distribution $\sigma$
yields the fields $\varphi^\lambda$.
The Weyl operators in representation are given by
\[
W^{\lambda,\omega}(s)  \, \doteq \, e^{ i\varphi^\lambda(s) } \, = \, e^{i\bmu_\sigma^a(s)} \otimes e^{i\varphi(s)} \, .
\]
Given the states 
$\omega_{\sigma,q} \in \mS(\mW)$ (\ref{eq.00.12}), $q \in \bZ$,
we find 
%
%
%
%
\[
\omega_{\sigma,q}(W(s)) \ = \  
\frac{1}{|\vec{q}|!} \, \sum_\pi \prod_k 
\int\lb \Omega^i_f(\bsx) , e^{-i \sigma\star s_0(\bsx)} \Omega^{\pi(i)}_f(\bsx) \rb \, d^3\bsx 
\, \cdot \,  
e^{-1/4 \| s_0 + i s_1 \|_2^2 }
\, .
\]
The localization properties of the above states depend on $\sigma$
as discussed in the previous paragraph. For example, in the case 
$q=-1$, $\sigma(\bsx) = 1/|\bsx|$, so that $\Omega^1_f \in \bsh_+$,
\begin{equation}
\label{eq.00.13}
\omega_{\sigma,1}(W(s)) \ = \  
\int \lb \Omega^1_f(\bsx) , e^{-i \int |\bsx - \bsy|^{-1} s_0(\bsy) d^3\bsy } \Omega^1_f(\bsx) \rb 
\, d^3\bsx \cdot 
e^{-1/4 \| s_0 + i s_1 \|_2^2 }
\ .
\end{equation}
%
%
\end{ex}

\paragraph{Elementary properties of the twisted bosonic field.}
Our aim is now to give some elementary properties of the twisted field $\varphi^\lambda$
defined in (\ref{eq.00.03}). 
As a first step we note that 
%
for all $s,s' \in \mS$ we have
\[
[ \varphi^\lambda(s) \, , \, \varphi^\lambda(s') ] \ = \ 
[ \varphi(s) \, , \, \varphi(s') ] \ = \ 
-i\eta(s,s') \, ,
\]
where $\varphi$ is the untwisted bosonic field defined by the regular reference state 
$\omega \in \mS(\mW)$.
Assuming that $\varphi$ is constructed in terms of creation-annihilation operators 
$\bsa_b^\circ(s)$,
we may write 
\begin{equation}
\label{eq.tbf01}
\varphi^\lambda(s) \ = \ 
\frac{1}{\sqrt{2}} ( \bsa_{b,\sigma}(s) + \bsa_{b,\sigma}^*(s) )
\end{equation}
where
\begin{equation}
\label{eq.tbf02}
\bsa_{b,\sigma}^\circ(s) \, \doteq \, 
\bI_f \otimes \bsa_b^\circ(s) + \frac{1}{\sqrt{2}} \bmu_\sigma^a(s) \otimes \bI_b 
\, ,
\end{equation}

\noindent 
and $\circ$ indicates a blank or the * symbol.
Starting from the hypothesis that $\bsa_b(s) \Omega_b = 0$, we find
\[
\bsa_{b,\sigma}(s) \, \Omega \ = \ 0 \, ,
\]
where $\Omega \doteq \Omega_f \otimes \Omega_b$ and $\Omega_f$ is the 
fermionic Fock vacuum. Thus
$\lb \Omega , \varphi^\lambda(s) \Omega \rb = 0$
provided that the same is true for $\varphi$ with respect to $\Omega_b$
(that is the case when $\omega$ is the Fock state).
%

\medskip 

In accord with the considerations of the previous paragraphs, 
$\varphi^\lambda$ is highly reducible on $\mH^\omega$.
The fact that $\varphi^\lambda$ commutes with the fermionic number operator
implies that we have restricted fields $\varphi^{\lambda,n}$ defined on 
the Hilbert spaces $\mH^{\omega,n} \doteq (\otimes_a^n \bsh) \otimes \bsh^\omega$, 
$n \in \bN$, so that we may write
$\varphi^\lambda =  \oplus_n \varphi^{\lambda,n}$,
$\mH^\omega  = \oplus_n \mH^{\omega,n}$.
By definition $\varphi^{\lambda,0}$ is the untwisted field $\varphi$
acting irreducibly on $\mH^{\omega,0} \simeq \bsh^\omega$.
Instead, the fields $\varphi^{\lambda,n}$ can in turn be reduced on
the cyclic subspaces 
$\mH^{\lambda,\omega,\Omega^q} \subset \mH^{\omega,n}$, $n = |\vec{q}|$ (\ref{eq.00.Hilb}).
An alternative decomposition is $\varphi^\lambda = \oplus_{q \in \bZ} \varphi^{\lambda,q}$
according to the electron-positron charge, obtained by reducing $\varphi^\lambda$
over the $q$ charge subspaces $\mH^{\omega,q}$ (\ref{eq.00.Hilb});
they correspond to the representations $\pi^{\lambda,\omega,q}$ of the 
Weyl algebra (\ref{eq.00.Hilb'}).

\paragraph{Expectation values.}
Let $n = |\vec{q}| \neq 0$ and $\Omega^q = \Omega^q_f \otimes \Omega_b \in \mH^{\omega,n}$
a normalized vector of the type (\ref{eq.00.12'}), where $\Omega^q_f \in \otimes_a^n \bsh$ 
has compact support and charge $q$. 
%
%
We consider, at varying of $m \in \bN$, the Hilbert spaces
\begin{equation}
\label{eq.tbf03}
\mH^\lambda_m \ \doteq \ 
{\mathrm{span}}\{ \varphi^\lambda(s_1) \cdots \varphi^\lambda(s_m) \Omega^q \}
\, \subseteq \mH^{\omega,n} \, ,
\end{equation}
%
%
%
%
%
and the expectation values
\begin{equation}
\label{eq.weyl.vev1}
{\bs w}^{\lambda,q}_m(s^1 , \ldots , s^m) 
\ \doteq \ 
\lb \Omega^q \, , \, \varphi^\lambda(s^1) \cdots \varphi^\lambda(s^m) \Omega^q \rb 
\ \ \ , \ \ \ 
s^1 , \ldots , s^m \in \mS \, .
\end{equation}
We analyze the case $n=1$, $q=-1$, so that we have a vector
$\Omega^1_f \in C_c^\infty(\bR^3,\bC^4) \subset \bsh_+$ and 
$
\bmu_\sigma^a(s) \Omega^1_f = 
\bmu_{\sigma,s} \Omega^1_f =
- \sigma \star s_0 \cdot \Omega^1_f$.
%
Writing 
\[
{\bs w}_m(s^1 , \ldots , s^m)
\ \doteq \ 
\lb \Omega^1 \, , \, \varphi(s^1) \cdots \varphi(s^m) \Omega^1 \rb
\ = \ 
\lb \Omega_b \, , \, \varphi(s^1) \cdots \varphi(s^m) \Omega_b \rb
\]
and 
\[
{\bs w}^\sigma_m(s^1,\ldots,s^m) \ \doteq \  
\lb \Omega^1 , \bmu_\sigma^a(s^1) \cdots \bmu_\sigma^a(s^m) \Omega^1 \rb \ = \ 
\lb \Omega^1_f , \bmu_{\sigma,s^1} \cdots \bmu_{\sigma,s^m} \Omega^1_f \rb \, ,
\]
we have the "$m$-point functions"
%
%
\[
{\bs w}^{\lambda,1}_m(s^1 , \ldots , s^m) 
\ = \ 
\sum_k \sum_{ \bsm = \bsi \cup \bsj }
{\bs w}_{m-k}(s^{i(1)},\ldots\,s^{i(m-k)}) \, {\bs w}^\sigma_k(s^{j(1)},\ldots\,s^{j(k)})
\, ,
\]
where $\bsm = \{ 1 , \ldots m \}$ and $\bsm = \bsi \cup \bsj$ is any disjoint union
with $\bsi = \{ i(1) < \ldots < i(m-k) \}$ and $\bsj = \{ j(1) < \ldots < j(k) \}$.
Note that we may have ${\bs w}_1 \neq 0$, for example when
$\omega(W(s)) = e^{i\rho(f) - 1/4 \| s_0 + i s_1 \|_2^2}$ for some
distribution $\rho \in \mS'(\bR^3)$ (in this case ${\bs w}_1(s) = \rho(s)$).
Also note that in general we have
\[
{\bs w}^\sigma_1(s) \, = \, 
\lb \Omega^1_f ,  \bmu_{\sigma,s} \Omega^1_f \rb \, \neq \, 0 
\]
for $\Omega^1_f$ having support overlapping with that of $\sigma \star s$.
When $\omega$ is the Fock state, the odd terms ${\bs w}_{2k+1}$ vanish so that 
%
%
%
%
\[
{\bs w}^{\lambda,1}_{2m+1}(s^1 , \ldots , s^{2m+1})
\, = \, 
\sum_{k \ {\mathrm{odd}}} \sum_{ 2\bsm+1 = \bsi \cup \bsj }
{\bs w}_{2m+1-k}(s^{i(1)},\ldots\,s^{i(2m+1-k)}) \, 
{\bs w}^\sigma_k(s^{j(1)},\ldots\,s^{j(k)})
\, ,
\]
and in particular ${\bs w}^{\lambda,1}_1 = {\bs w}^\sigma_1 \neq 0$.
Thus there are non-trivial transition amplitudes between states belonging to the spaces
$\mH^\lambda_m$ and $\mH^\lambda_{m+1}$, in contrast with the case where $\sigma = 0$ 
and $\omega$ the Fock state.
Of course, analogous considerations hold for generic  $q \in \bZ$.
Since the hierarchy of distributions ${\bs w}^{\lambda,q}$ 
is invariant for unitary equivalence of representations of $\mW$, we proved:
\begin{prop}
Let $\omega$ be the Fock space on $\mW$ and $\sigma \neq 0$. 
Then, given a compactly supported fermionic vector $\Omega^q_f$ with charge $q \in \bZ$, 
the cyclic representation $\pi^{\lambda,\omega,\Omega^q}$ 
in (\ref{eq.00.Hilb'}) is not unitarily equivalent to $\pi^\omega$. 
\end{prop}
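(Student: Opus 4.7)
The plan is to use an invariant of the cyclic representation---the first Wightman moment at the distinguished cyclic vector---to distinguish $\pi^{\lambda,\omega,\Omega^q}$ from $\pi^\omega$. If a unitary intertwiner $U: \mH^{\lambda,\omega,\Omega^q}\to\bsh^\omega$ existed, then $v := U\Omega^q$ would be cyclic in $\bsh^\omega$ and the induced states would agree: $\omega_{\sigma,q}(\cdot)=\lb v, \pi^\omega(\cdot) v\rb$. So it suffices to rule out $\omega_{\sigma,q}$ being realised as any vector state of the Fock representation, and my route is to compare the first moment already.

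First I would use the additive decomposition (\ref{eq.00.03}) of $\varphi^\lambda$ together with the centred-Gaussian identity $\lb\Omega_b,\varphi(s)\Omega_b\rb=0$ for the Fock state to obtain
\[
{\bs w}^{\lambda,q}_1(s) \,=\, \lb \Omega^q_f, \bmu^a_\sigma(s)\Omega^q_f\rb \,=\, \int F(\bsx)\,(\sigma \star s_0)(\bsx)\,d^3\bsx,
\]
where, from (\ref{eq.00.10}), $F$ is the compactly supported real function on $\bR^3$ assembled from the charge-weighted squared moduli of the spinor factors of $\Omega^q_f$. The key technical step is then to verify that this is a non-trivial distribution in $s_0$: once $F\not\equiv 0$, Paley--Wiener gives $\hat F$ entire, while $\sigma\neq 0$ forces $\hat\sigma\neq 0$ as a tempered distribution, so $\hat F\cdot\hat\sigma\neq 0$ and hence $F\star\check\sigma\neq 0$. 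That $F\not\equiv 0$ is immediate for total charge $q\neq 0$ by sign non-cancellation in the weighted sum; for $q=0$ it is a mild genericity constraint on $\Omega^q_f$ that one can arrange, or bypass by the analogous argument at a higher-order moment.

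For the conclusion I would invoke the principle highlighted by the paper: the hierarchy ${\bs w}^{\lambda,q}_m$ is an invariant of unitary equivalence of cyclic representations, so the mismatch ${\bs w}^{\lambda,q}_1 \not\equiv 0 = {\bs w}_1$ already rules out equivalence as pointed cyclic representations. To promote this to the proposition's actual statement---non-equivalence of representations without tracking cyclic vectors---I would exploit the product structure $\omega_{\sigma,q}(W(s)) = G_q(s_0)\,\omega(W(s))$, with $G_q(s_0) = \lb \Omega^q_f, \iGamma_a(u_{\sigma,s})\Omega^q_f\rb$ depending only on $s_0$ and satisfying $|G_q(s_0)|<1$ for generic $s_0$. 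This functional shape is incompatible with any vector state $\lb v, W^\omega(s) v\rb$ of $\pi^\omega$, since for $v\in\bsh^\omega$ the ratio $\lb v, W^\omega(s) v\rb/\omega(W(s))$ depends analytically on the full variable $s=(s_0,s_1)$ and cannot reduce to a modulus-strictly-less-than-one function of $s_0$ alone. Making this rigidity argument fully rigorous is the main obstacle, and it is where the detailed analysis of moments would be concentrated.
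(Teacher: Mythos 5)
Your core computation matches the paper's own proof: both show that ${\bs w}^{\lambda,q}_1 = {\bs w}^\sigma_1 \neq 0$ while the Fock state has vanishing first moment, and both appeal to invariance of the Wightman-type hierarchy under unitary equivalence. Your Paley--Wiener elaboration of the nonvanishing step is a useful addition, since the paper just asserts ${\bs w}^\sigma_1 \neq 0$ when the support of $\Omega^1_f$ meets that of $\sigma \star s$. Do note, however, that for a completely arbitrary nonzero $\sigma \in \mS'(\bR^3)$ the argument ``$\hat F$ entire and $\hat\sigma \neq 0$ imply $\hat F \hat\sigma \neq 0$'' is not automatic: $\hat\sigma$ could in principle be supported where $\hat F$ vanishes to high order. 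For $q \neq 0$ you have $\hat F(0) = \int F \neq 0$, which disposes of measures concentrated at the origin (Lebesgue $\sigma$) and of the physically relevant Yukawa/Coulomb cases where $\hat\sigma$ is a nonvanishing function, but the fully general case needs a word. Your own treatment of $q=0$ is frankly a placeholder, much as the paper's ``analogous considerations hold for generic $q$'' is.

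The genuine divergence from the paper is in your last step, and that is also where your gap sits. You rightly observe that comparing first moments at the distinguished cyclic vectors only shows $\omega_{\sigma,q} \neq \omega$ as states; to conclude inequivalence of the bare representations one must rule out $\omega_{\sigma,q}$ being \emph{any} vector state of the irreducible Fock representation. The paper glosses over this --- its invariance claim is really an invariance for pointed cyclic representations --- so your concern is legitimate. But the rigidity argument you sketch to close it is not correct as stated: the ratio $\lb v, W^\omega(s) v \rb / \omega(W(s))$ for a vector $v$ in Fock space is not forced to depend on $s_1$, nor to have modulus $1$. Coherent vectors already give ratios $e^{i\rho(s_0)}$ that depend only on $s_0$, and finite-particle superpositions give ratios of modulus strictly below $1$ at generic $s$, so neither of the two features you single out in $G_q(s_0)$ is on its own an obstruction to being a Fock vector state. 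You acknowledge this is ``the main obstacle,'' and I agree; as written it is not a proof and, in its present form, it is not clear it can be made one. The safest reading of the proposition, and the one the paper's proof actually supports, is inequivalence of the cyclic representations together with their GNS cyclic vectors; under that reading, your first two steps already give the paper's argument.
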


Let $R \subset \bR^3$ such that 
$R \cap (\supp(\Omega^q_f) + \supp(\sigma)) = \emptyset$.
Then ${\bs w}^\sigma_m(s^1,\ldots,s^m) = 0$ for all $s^1,\ldots,s^m$
supported in $R$ and this indicates that $\pi^{\lambda,\omega,\Omega^q}$
is unitarily equivalent to $\pi^\omega$ on such a region.
Yet, as we shall see in the following section,
in cases of physical interest we have $\supp(\sigma) = \bR^3$.

\medskip 

Now, it is well-known that distributions $\rho \in \mS'(\bR^3)$ induce
Weyl algebra automorphisms $\beta_\rho(W(s)) \doteq e^{i\rho(s)}W(s)$, $W(s) \in \mW$, 
such that in the regular representation $\pi^\omega \beta_\rho$ the corresponding 
Segal field is $\varphi_\rho \doteq \varphi + \rho \bI_b$. 
The interpretation of $\rho$ is then that it is an external field interacting with $\varphi$.
Picking $\rho = {\bs w}^\sigma_1$ we have that $\varphi^\lambda$ and $\varphi_\rho$ 
have the same "one-point function", yet for $m = 2$ we find
\[
\lb \Omega_b , \varphi_\rho(s^1) \varphi_\rho(s^2) \Omega_b \rb 
\ = \ 
{\bs w}_2(s^1,s^2) + {\bs w}^\sigma_1(s^1) {\bs w}^\sigma_1(s^2) \, .
\]
The above distribution in general does not equal ${\bs w}^{\lambda,1}_2$ 
because of the term 
${\bs w}^\sigma_2(s^1,s^2)$,
which differs from ${\bs w}^\sigma_1(s^1) {\bs w}^\sigma_1(s^2)$ unless
the operators $\bmu_{\sigma,s^1}$, $\bmu_{\sigma,s^2}$ are given by 
multiplication by a scalar.
This proves that in general our representations cannot be interpreted in 
terms of external potentials.

\paragraph{Infinitesimal commutation relations.}
In the sequel, to be concise we will write
\[
\bmu_\sigma^a(s) \equiv \bmu_\sigma^a(s) \otimes \bI_b
\ \ , \ \
\varphi(s) \equiv \bI_f \otimes \varphi(s) 
\ \ , \ \ 
\psi(w) \equiv \psi(w) \otimes \bI_b
\, ,
\]
so that 
$\varphi^\lambda = \varphi + \bmu_\sigma^a$.
Our aim is to check the infinitesimal version of (\ref{eq.00.01'}),
\begin{equation}
\label{eq.00.06}
[ \varphi^\lambda(s) \, , \,  \psi(w) ]
\ =  \ 
-\psi(\sigma \star s_0 \cdot w )
\, .
\end{equation}

\medskip

\noindent 
To this end, we proceed by evaluating the commutator at l.h.s. of (\ref{eq.00.06})
on vectors of the type $w_f \otimes v_b$, where
$v_b \in \Dom(\varphi) \subset \bsh^\omega$ and
$w_f \doteq w_1 \otimes_a \ldots \otimes_a w_N \in 
 \otimes_a^N \bsh \cap \Dom (\bmu_\sigma^a(s))$
can be written for convenience
\[
w_f = \frac{1}{N} \sum_k (-1)^k w_k \otimes w^a_k
\ \ \ , \ \ \ 
w^a_k \in \otimes_a^{N-1} \bsh
\, .
\]
Since 
$[ \varphi(s) \, , \,  \psi(w) ] = 0$, 
we find
$[ \varphi^\lambda(s) \, , \,  \psi(w) ] 
=
[ \bmu_\sigma^a(s) , \psi(w) ]$ 
and, given the fermionic creation and annihilation operators 
$\bsa_f^*(w)$, $\bsa_f(w)$, $w \in \bsh$, we perform the standard computations 
%

\[
\begin{array}{lccl}
\bmu_\sigma^a(s) \bsa_f(\kappa w) w_f
& = & &
%
n^{-1/2} \sum_k (-1)^{k-1}  
\left( 
\lb \kappa w , \bmu_{\sigma,s} w_k \rb \, w^a_k
+
\lb \kappa w , w_k \rb \, \bmu_\sigma^a(s) w^a_k
\right) + 
\,  \\ & & - &
n^{-1/2} \sum_k (-1)^{k-1}  
\lb \bmu_{\sigma,s} \kappa w , w_k \rb \, w^a_k
%
%
%
\, = \\ & = &  & 
\bsa_f(\kappa w) \bmu_\sigma^a(s) w_f + \bsa_f(\kappa \bmu_{\sigma,s} w) w_f \, ,
\end{array}
\]

\[
\begin{array}{lcl}
\bmu_\sigma^a(s) \bsa_f^*(w) w_f
& = & 
\sqrt{n+1} \bmu_{\sigma,s}w \otimes_a w_f +
\sqrt{n+1} w \otimes_a \sum_k \ldots \otimes_a  \bmu_{\sigma,s}w_k \otimes_a \ldots 
\, = \\ & = & 
\bsa_f^*(\bmu_{\sigma,s}w) w_f + \bsa_f^*(w) \bmu_\sigma^a(s) w_f \, ,
\end{array}
\]
%
%
having used (\ref{eq.00.11}). Summing up the two equalities we get
\begin{equation}
\label{eq.00.06'}
[ \bmu_\sigma^a(s) \, , \,  \hat{\psi}(w) ]
\ = \ 
\hat{\psi}(\bmu_{\sigma,s}w)
\, ,
\end{equation}
where $\hat{\psi}$ stands for the self-dual Dirac field.
For $w \in \bsh_+ \oplus 0$, we have $\hat{\psi}(w) = \psi(w)$ and $\bmu_{\sigma,s}w = -(\sigma \star s_0)w$;
thus we get (\ref{eq.00.06}), corresponding to the expression
%
%

\begin{equation}
\label{eq.00.07}
[ \varphi^\lambda(\bsx) \, , \,  \psi(\bsy) ]
\ =  \ 
- \sigma(\bsy - \bsx) \psi(\bsy)
\, .
\end{equation}

\paragraph{Differential operators.}
Let 
$P_\partial \, $ 
denote a differential operator with constant coefficients
(e.g. $P_\partial = -\bs\Delta$, the Laplace operator).
Assuming that $\sigma$ is a fundamental solution of $P_\partial \, $
%
%
we have $P_\partial \sigma = \delta$, that is,
$P_\partial \,  (\sigma \star s_0) = s_0$, and by elementary properties of convolutions we find
$
P_\partial \,  \bmu_{\sigma,s} \doteq 
\bmu_{\sigma , P_\partial \,  s} =
\bmu_{P_\partial \, \sigma , s} =
\bmu_{\delta,s}
$.
%
%
Since $\delta \star s_0 = s_0$, for $w = w_+ \oplus \bar{w}_- \in \bsh$ we have
\[
P_\partial \,  \mu_{\sigma,s} ( w_+ \oplus \bar{w}_- ) \ = \ 
\bmu_{\delta,s}w \ = \ 
(-s_0 w_+) \oplus (s_0 \bar{w}_-) \ = \ 
\tau s_0 w
\ \ \ , \ \ \ 
\tau
\ \doteq \ 
\left(
\begin{array}{cc}
- {\bf 1}_4 & 0 \\
0 & {\bf 1}_4
\end{array}
\right)
\, .
\]
%
%
As we saw in the previous paragraphs, $\bmu_{\bsdelta,s}$ is the 
infinitesimal counterpart of the unitary representation 
\[
u_{\bsdelta,s}w(\bsx) \ \doteq \ e^{-is_0}w_+ \oplus e^{is_0}\bar{w}_-
\]
inducing local gauge transformations of the $\psi$ field.
Given $\bmu_\bsdelta^a(s) \doteq \iGamma_a(\bmu_{\bsdelta,s})$,
to be concise we write
\[
\varrho(s) \ \doteq \ 
\varphi^\lambda(P_\partial s) \ = \ 
\varphi(P_\partial s) + \bmu_\bsdelta^a(s) \, .
\]
By applying (\ref{eq.00.06}), and using the fact that 
$[ \varphi(s) \, , \,  \psi(w) ] = 0$, 
we obtain the following relations on vectors in domain and 
for $w \in C^\infty_c(\bR^3,\bC^4) \subset \bsh_+$:

\begin{equation}
\label{eq.00.18}
[ \varrho(s) \, , \, \psi(w) ]
\ = \ 
- \psi(s_0 w)
\ \ \ , \ \ \ 
e^{i\varrho(s)} \psi(w) e^{-i\varrho(s)}
\ = \ 
\psi(e^{-i s_0} w) \, .
\end{equation}

\medskip 

\noindent Thus $\psi(w)$ and $\varrho(s)$ commute for $\supp(s) \cap \supp(w) = \emptyset$,
independently of the commutation relations between $\psi$ and $\varphi^\lambda$.
As a consequence, if we evaluate the states (\ref{eq.00.12}) over the C*-subalgebra $\mW_\sigma$ 
generated by Weyl unitaries $W(P_\partial s)$, then we find
\[
\omega_{\sigma,q}(W(P_\partial s)) \ = \ 
\lb \Omega^q_f \, , \, \iGamma_a(u_{\bsdelta,s}) \Omega^q_f \rb \, \omega(W(P_\delta s))
\, ,
\]
implying 
\[
\omega_{\sigma,q}(W(P_\partial s)) \ = \ \omega(W(P_\delta s))
\ \ \ , \ \ \
\supp(s) \cap \supp(\Omega^q_f) = \emptyset \, .
\]
We conclude that the restriction of $\omega_{\sigma,q}$ to $\mW_\sigma$ is localized on
the support of $\Omega^q_f$.


\section{Synthesis and examples}
\label{sec.syn}

%
%
\begin{thm}
\label{thm.00.1}
Let $\mW$ denote the Weyl algebra of be the symplectic space $(\mS,\eta)$ (\ref{eq.00.i1}), and
$\omega \in \mS(\mW)$ a regular state with GNS triple $( \bsh^\omega , \Omega_b , \pi^\omega )$.
Let $\bsh = \bsh_+ \oplus \bsh_-$ be the Hilbert space with conjugation $\kappa$ (\ref{eq.00.i2}).
Then for any distribution $\sigma \in \mS'(\bR^3)$ the following properties hold.
\begin{enumerate}
    \item There are a Dirac (electron) field $\psi$ and 
          a bosonic Field $\varphi^\lambda$,
          both defined on the Hilbert space 
          $\mH^\omega \doteq \mF_a(\bsh) \otimes \bsh^\omega$
          and fulfilling, besides the usual C(A/C)Rs, the commutation relations
          \begin{equation}
          \label{eq.thm.00.1.1}
          [ \varphi^\lambda(s) , \psi(w) ] \ = \ 
          -\psi(\sigma \star s_0 \cdot w) 
          \end{equation}
          for all $w \in C^\infty_c(\bR^3,\bC^4) \subset \bsh_+$,
          $s=(s_0,s_1),s' \in \mS$. In terms of fields,
          \begin{equation}
          \label{eq.thm.00.1.1'}
          [ \varphi^\lambda(\bsx) , \psi(\bsy) ] \ = \ 
          -\sigma(\bsy - \bsx) \psi(\bsy) \, .
          \end{equation}
    \item If $\sigma$ is a fundamental solution of the differential operator
          $P_\partial$, then defining $\varrho \doteq \varphi^\lambda P_\partial$
          we have
          \begin{equation}
          \label{eq.thm.00.1.2}
          [ \varrho(s) , \psi(w) ] \, = \, 
          -\psi(s_0 w) 
          \ \ \ , \ \ \ 
          e^{i\varrho(s)} \psi(w) e^{-i\varrho(s)} \, = \, 
          \psi(e^{-is_0}w)
          \, .
          \end{equation}
          That is, $\varrho$ is a generator of
          the local gauge transformations $\psi(w) \to \psi(e^{-is_0}w)$.
          The operators $\varrho(s)$ and $\psi(w)$ commute for 
          $\supp(s) \cap \supp(w) = \emptyset$.
    \item Let $q \in \bZ$ and $\Omega^q_f \in \mF_a^q(\bsh)$ denote a product vector 
          (\ref{eq.00.12'}).
          Then there is a representation
          $\pi^{\lambda,\omega,\Omega^q} : \mW \to \mB(\mH^{\lambda,\omega,\Omega^q})$,
          $\mH^{\lambda,\omega,\Omega^q} \subset \mH^\omega$,
          having cyclic vector $\Omega^q \doteq \Omega^q_f \otimes \Omega_b$.
          It turns out that $\pi^{\lambda,\omega,\Omega^q}$ is the GNS representation
          of the state
          \begin{equation}
          \label{eq.thm.00.1.3}
          \omega_{\sigma,q}(W(s)) \ \doteq \ 
          \lb \Omega^q_f , \iGamma_a(u_{\sigma,s})\Omega^q_f \rb \, 
          \omega(W(s))
          \, ,
          \end{equation}
          where $\iGamma_a(u_{\sigma,s})$ is the fermionic 2nd quantized of 
          $u_{\sigma,s}w \doteq 
          e^{-i\sigma \star s_0}w_+ \oplus e^{i\sigma \star s_0} \bar{w}_-$, 
          $w = w_+ \oplus \bar{w}_- \in \bsh$.
          The states $\omega_{\sigma,q}$ are localized in the region 
          $\supp(\Omega^q_f) + \supp(\sigma)$ in the sense that 
          \begin{equation}
          \label{eq.thm.00.1.4}
          \omega_{\sigma,q}(W(s)) \, = \, \omega(W(s)) 
          \ \ \ , \ \ \ 
          \supp(s) \cap ( \supp(\Omega^q_f) + \supp(\sigma) ) = \emptyset \, ,
          \end{equation}
          yet their restrictions on $\mW_\sigma$ are localized in $\supp(\Omega^q_f)$,
          \begin{equation}
          \label{eq.thm.00.1.4'}
          \omega_{\sigma,q}(W(P_\partial s)) \, = \, \omega(W(s)) 
          \ \ \ , \ \ \ 
          \supp(s) \cap \supp(\Omega^q_f) = \emptyset \, .
          \end{equation}
          %
\end{enumerate}
\end{thm}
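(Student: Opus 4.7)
The plan is essentially to assemble the three parts from the explicit constructions and calculations already in Section~\ref{sec.A}; no genuinely new argument is required, but one has to pay attention to supports and to regularity in part~3. The main delicate step will be the sharper restricted localization (\ref{eq.thm.00.1.4'}) on $\mW_\sigma$, where the cancellation $P_\partial(\sigma \star s_0) = s_0$ must be exploited correctly.

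For part~1, I would take $\psi$ to be the free Dirac field acting on $\mF_a(\bsh) \otimes \bI^\omega$ as in \cite{Bon70}, and define $\varphi^\lambda$ as the Stone generator of the regular representation $\pi^{\lambda,\omega}$ of (\ref{eq.00.02}). Regularity of $\pi^{\lambda,\omega}$ follows from (\ref{eq.lem.A.1}) and regularity of $\omega$, yielding the decomposition $\varphi^\lambda(s) = \bI_f \otimes \varphi(s) + \bmu_\sigma^a(s) \otimes \bI_b$ of (\ref{eq.00.03}). The commutation relation (\ref{eq.thm.00.1.1}) is then precisely (\ref{eq.00.06}), obtained by specializing the self-dual identity (\ref{eq.00.06'}) to $w \in \bsh_+ \oplus 0$; the local form (\ref{eq.thm.00.1.1'}) follows by rewriting $\sigma \star s_0(\bsy) = \int \sigma(\bsy - \bsx) s_0(\bsx)\, d^3\bsx$ and smearing.

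For part~2, assume $P_\partial \sigma = \delta$. Then $P_\partial$ commutes with convolution, so $P_\partial(\sigma \star s_0) = s_0$ and $\bmu_{\sigma, P_\partial s} = \bmu_{\delta,s}$, as already noted in the paragraph on differential operators. Applying $P_\partial$ to (\ref{eq.00.06}) yields $[\varrho(s), \psi(w)] = -\psi(s_0 w)$. The exponentiated identity follows by integrating this infinitesimal adjoint action, or equivalently from the fact that $e^{i\varrho(s)}$ implements the fermionic second quantization of $u_{\delta,s}$ on the electron factor while the bosonic contribution $\varphi(P_\partial s)$ commutes with $\psi$. Locality when $\supp(s) \cap \supp(w) = \emptyset$ is immediate since then $s_0 w \equiv 0$.

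For part~3, I would take $\mH^{\lambda,\omega,\Omega^q}$ and $\pi^{\lambda,\omega,\Omega^q}$ as in (\ref{eq.00.Hilb})--(\ref{eq.00.Hilb'}). Formula (\ref{eq.thm.00.1.3}) is a direct computation from the tensor-product definition of $W^{\lambda,\omega}(s)$ together with $\omega(W(s)) = \lb \Omega_b, W^\omega(s) \Omega_b \rb$, and cyclicity of $\Omega^q$ under the Weyl relations identifies the triple with the GNS triple of $\omega_{\sigma,q}$. For the localization (\ref{eq.thm.00.1.4}), if $\supp(s) \cap (\supp(\Omega^q_f) + \supp(\sigma)) = \emptyset$ then $(\sigma \star s_0)(\bsy) = 0$ for every $\bsy \in \supp(\Omega^q_f)$, so $u_{\sigma,s}$ acts as the identity on each factor of $\Omega^q_f$ and $\iGamma_a(u_{\sigma,s}) \Omega^q_f = \Omega^q_f$, reducing (\ref{eq.thm.00.1.3}) to $\omega(W(s))$. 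For the sharper statement (\ref{eq.thm.00.1.4'}), the identity $\sigma \star P_\partial s_0 = s_0$ removes the convolution tail beyond $\supp(s_0)$, so that the disjointness hypothesis shrinks from $\supp(\Omega^q_f) + \supp(\sigma)$ to $\supp(\Omega^q_f)$; this is exactly the argument given at the end of the paragraph on differential operators.
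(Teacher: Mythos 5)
Your proposal is correct and mirrors the paper's own argument, which consists precisely in assembling the constructions of Section~\ref{sec.A}: part 1 from (\ref{eq.00.02}), (\ref{eq.00.03}), (\ref{eq.00.06'}) and (\ref{eq.00.06}); part 2 from the paragraph on differential operators and (\ref{eq.00.18}); part 3 from (\ref{eq.00.12}), (\ref{eq.00.Hilb})--(\ref{eq.00.Hilb'}) and the support discussion following (\ref{eq.00.12}). The sharpened localization (\ref{eq.thm.00.1.4'}) is handled exactly as in the paper, via $\sigma \star P_\partial s_0 = (P_\partial\sigma)\star s_0 = s_0$ so that $u_{\sigma,P_\partial s}=u_{\delta,s}$ acts trivially on $\Omega^q_f$ once $\supp(s)\cap\supp(\Omega^q_f)=\emptyset$.
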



\ 

\noindent \emph{Example 1. Toy models with the Lebesgue measure.}
We consider the distribution $\sigma(\bsx) \equiv 1$ 
given by the Lebesgue measure. 
Applying the convolution by a test function $s_0 \in \mS(\bR^3)$ yields
$\sigma\star s_0(\bsx) = -\lb s_0 \rb$ where $\lb s_0 \rb \in \bR$ is the 
integral of $s_0$.
Given a regular GNS triple $( \bsh^\omega , \Omega_b , \pi^\omega )$ for 
$\omega \in \mS(\mW)$, where $\mW$ is the Weyl algebra (\ref{eq.00.i1}), 
we form the Hilbert space $\mH^\omega \doteq \mF_a(\bsh) \otimes \bsh^\omega$, 
on which the fields $\varphi^\lambda$ and $\psi$ are defined.
They fulfil the commutation relations
\[
[ \varphi^\lambda(s) \, , \, \psi(w)  ] \ = \ \lb s_0 \rb \psi(w)
\, ,
\]
thus test functions $s \in \mS$ with $\lb s_0 \rb = \pm 1$ define field operators that
detect the electron-positron charge.
The explicit expression of $\varphi^\lambda$ is given by 
$\varphi^\lambda(s) = \varphi(s) + \bmu_\sigma^a(s)$
where $\varphi$ is the bosonic field defined by $\omega$ and
$\bmu_\sigma^a(s)$ the fermionic second quantization of the 
multiplication operator by $-\lb s_0 \rb$.
We have the decomposition $\varphi^\lambda = \oplus_{n \in \bN} \varphi^{\lambda,n}$
with respect to the fermion number operator,
\[
\varphi^{\lambda,0} = \varphi 
\ \ \ , \ \ \ 
\varphi^{\lambda,n}(s) = 
\varphi(s) - P_n^a \sum_{k=1}^n \tau_{n,k} \lb s_0 \rb 
\ \ , \ \ 
n \in \bN 
\, .
\]
Taking vectors of the type 
$w^q_f = w_1 \otimes_a \ldots \otimes_a w_n$
with charge $q \in \bZ$ and such that $w_k \in \bsh_\pm$, $k=1,\ldots ,n$,
we have $\tau_{n,k} w_k = \pm w_k$ implying 
$\sum_{k=1}^n \tau_{n,k} w = q w^q_f$.
Thus we get the decomposition
$\varphi^\lambda = \oplus_{q \in \bZ} \varphi^{\lambda,q}$
where
$\varphi^{\lambda,0} = \varphi$,
$\varphi^{\lambda,q}(s) = \varphi(s) - q \lb s_0  \rb$;
this makes evident that 
\[
\varphi^\lambda(s) \ = \ \varphi(s) - \lb s_0 \rb Q \, .
\]
We now return on the Weyl algebra and define $\alpha \in \Aut\mW$,
$\alpha(W(s)) \doteq  e^{-i\lb s_0 \rb} W(s)$.
We have $W^{\lambda,\omega,q}(s) = \bI_q \otimes e^{-iq \lb s_0 \rb} W^\omega(s)$, 
so that $\pi^{\lambda,\omega,q} = \bI_q \otimes \pi^\omega \alpha^q$
where $\bI_q$ is the projection onto $\mF_a^q(\bsh)$.
Using $\bI_{q-1} \psi(w) = \psi(w) \bI_q$ we find
\[
\psi(w) \, (\bI_q \otimes e^{-iq \lb s_0 \rb} W^\omega(s))
\ = \
e^{-i \lb s_0 \rb} \, (\bI_{q-1} \otimes e^{-i(q-1) \lb s_0 \rb} W^\omega(s)) \, \psi(w) 
\, ,
\]
in accord with (\ref{eq.00.01'}). 
Thus the fermionic field operators are intertwiners of the type
\[
\psi(w) \in ( \pi^{\lambda,\omega,q} , \pi^{\lambda,\omega,q-1} \alpha ) \, :
\]
besides shifting the fermionic charge $q$, they intertwine the action of
the outer automorphism $\alpha$.

\medskip 

\noindent \emph{Example 2. Yukawa potentials.}
Let $m>0$ and $\sigma(\bsx) \doteq e^{-m|\bsx|}|\bsx|^{-1}$ intended as 
the fundamental solution $(-\bs\Delta + m^2)\sigma = \delta$.
We consider the Weyl algebra $\mW$ defined by (\ref{eq.00.i1}), a GNS triple 
$( \bsh^\omega , \Omega_b , \pi^\omega )$ for $\omega \in \mS(\mW)$ and the Hilbert space 
$\mH^\omega \doteq \mF_a(\bsh) \otimes \bsh^\omega$. Assuming that $\omega$ is regular,
we get a field $\varphi^\lambda$ such that 
\[
[ \varphi^\lambda(s) \, , \, \psi(w)  ] \ = \ -\psi(\sigma \star s_0 \cdot w)
\ \ \ , \ \ \ 
[ \varrho(s) \, , \, \psi(w)  ] \ = \ -\psi(s_0w)
\, ,
\]
where $\varrho(s) \doteq \varphi^\lambda((-\bs\Delta + m^2)s)$. 
Since $\supp(\sigma) = \bR^3$, we conclude that $\psi$ and $\varphi^\lambda$
are not relatively local. 
Despite that, $\varrho$ behaves like a charge density in the sense that
it generates local gauge transformations of $\psi$ 
and hence is able to detect the electron-positron charge,
\[
[ \varrho(s) \, , \, \psi(w)  ] \ = \ - \psi(w)
\, ,
\]
having chosen $s_0$ such that $s_0 \restriction \supp(w) = 1$. 
%
%
Defining $\Omega_c \doteq \psi(w) \Omega$, $w \in \bsh_+$, 
where $\Omega \doteq \Omega_f \otimes \Omega_b$,
for $\omega$ the standard Fock state we obtain the charged state 
\begin{align}
\label{eq.state.Yukawa}
\omega_{\sigma,-1}(W(s)) \ \doteq \
\lb \Omega_c \, , W^{\lambda,\omega}(s) \Omega_c \rb \ = 
\ \ \ \ \ \ \ \ \ \ \ \ \ \ \ \ \ \ \ \ \ \ \ \ \ \ \ \  \\  
\int 
\lb w(\bsx) , 
e^{-i \int e^{-m|\bsx - \bsy|}|\bsx - \bsy|^{-1} s_0(\bsy) d^3\bsy } 
w(\bsx) \rb
\, d^3\bsx \cdot 
e^{-1/4 \| s_0 + i s_1 \|_2^2 }
\, .
\end{align}
As observed in the general discussion, $\omega_{\sigma,-1}$ is not a
local excitation of $\omega$. 
Analogously we obtain states of $\mW$ with arbitrary charge $q \in \bZ$.

\medskip 

\noindent \emph{Example 3. Coulomb gauge.}
In the present paragraph $\sigma(\bsx) \doteq (4\pi |\bsx|)^{-1}$ is 
%
%
intended as a fundamental solution
of the Poisson equation, $-\bs\Delta \sigma = \delta$.
About the topic of Coulomb gauge we follow \cite[\S 7.8.1]{Strocchi}.
In this setting, one deals with a quantum electromagnetic potential $A = ( A_0 , \bsA)$ such that
${\bf div} \bsA = 0$
and a Dirac electron field $\psi$, with equal-time commutation relations
\begin{equation}
\label{eq.CG03}
[ \bsE_h(\bsx,t) \, , \, \psi(\bsy,t) ] 
\, = \, 
- \partial_h \sigma(\bsx - \bsy) \psi(\bsy,t)
\end{equation}
where $\bsE_h \doteq \partial_h \bsA_0 - \partial_0 \bsA_h$.
%
%
Other required commutation relations are
\[
[ \psi^*(\bsx,t) , \psi(\bsy,t) ]_+
\, = \, 
\delta(\bsx - \bsy)
\]
and, for $i=1,2,3$,
\begin{equation}
\label{eq.CG04}
[ \bsA_i(\bsx,t) , \dot{\bsA_j}(\bsy,t) ] \ = \ i\delta^{tr}_{ij}(\bs x - \bs y) 
\ \ \ , \ \ \ 
\delta^{tr}_{ij}(\bs x) 
\, \doteq \, 
\delta_{ij} \delta(\bsx) + \partial_i\partial_j \sigma(\bsx) \, ,
\end{equation}
%
%
all the other (anti)commutators being trivial.
In the free case we may pick $\bsA_0 = 0$, whilst in the interacting case
$\bsA_0 = - \bs\Delta^{-1}j_0$
where $j$ is the current of the Dirac field
and $\bs\Delta^{-1}$ is defined as a pseudo-differential operator.

\medskip

We want to show how our construction allows to exhibit a fixed-time model 
in which (\ref{eq.CG03}) holds.
In the present setting we do not involve $j_0$, because there are consolidated arguments that 
its restriction at a fixed time is not well-defined \cite[\S 4.6.1]{Strocchi}.
Instead we make use of our twisted field $\varphi^\lambda$, playing the role of $A_0$ and
related --as we shall see-- to the charge density.
To this end, we consider the usual Hilbert spaces $\bsh$ and $\mF_a(\bsh)$
defining $\psi$; then we consider the symplectic spaces
$(\mS,\eta)$ and $(\mS_{tr},\eta_{tr})$,
where $\eta$ is given by (\ref{eq.00.i1}) and
\begin{equation}
\label{eq.CG06}
\eta_{tr}(\vec\bsf,\vec\bsg) \, \doteq \,  
\int ( \bsf_1 \cdot P_{tr} \bsg_0 - 
       \bsg_1 \cdot P_{tr} \bsf_0 )
\, ,
\end{equation}
for $\vec\bsf = ( \bsf_0,\bsf_1 )$, $\vec\bsg = ( \bsg_0,\bsg_1 ) 
\in \mS_{tr} \doteq \mS(\bR^3,\bR^3) \oplus \mS(\bR^3,\bR^3)$, 
with
\[
P_{tr} \bsf_{0i}(\bsx)
\ \doteq \ 
\int \delta^{tr}_{ij}(\bsx - \bsy) \, \bsf_{0j}(\bsy) \, d^3\bsy \, .
\]
Given test functions $s \in \mS(\bR^3)$ and $\bsf \in \mS(\bR^3,\bR^3)$,
one can check straightforwardly that  
$P_{tr}\bsf \in C^\infty(\bR^3,\bR^3) \cap L^2(\bR^3,\bC^3)$
{\footnote{
To check the $L^2$ condition, we note that the Fourier transform of the term
$\partial_i\partial_j \sigma(\bsx - \bsy) \bsf(\bsy)$
reads
$-\bsk_i \bsk_j |\bsk|^{-2} \hat{\bsf}(\bsk) \sim 
 c_{ij}(\alpha,\theta) \hat{\bsf}(r,\alpha,\theta)$,
having passed to spherical coordinates,
where $c_{ij}$ are suitable continuous functions on the angles $\alpha,\theta$.
Thus one uses the fact that $\hat{\bsf}$ is $L^2$.
}}, 
$\Div P_{tr}\bsf = 0$, 
$P_{tr}\bs\nabla s = 0$, 
$P_{tr} = P_{tr}^2 = P_{tr}^*$,
%
%
where the adjoint is in the sense of the scalar product of $L^2(\bR^3,\bC^3)$.
We denote the corresponding Weyl algebras by $\mW$ and $\mW_{tr}$ respectively, 
and write their spatial tensor product $\mA \doteq \mW \otimes \mW_{tr}$.
Generators of $\mA$ can be written as products 
$W(s) W_{tr}(\vec\bsf)$
of commuting unitary symbols, fulfilling separately the Weyl relations 
with respect to the above symplectic forms. 
Let now $( \bsh_0 , \Omega_0 , \pi_0 )$ and $( \bsh_{tr} , \Omega_{tr} , \pi_{tr} )$
be GNS triples for $\omega_0 \in \mS(\mW)$ and $\omega_{tr} \in \mS(\mW_{tr})$ respectively,
%
%
and $\omega \doteq \omega_0 \otimes \omega_{tr}$,
$\omega \in \mS(\mA)$, denote the product state.
We consider the Hilbert space 
$\mH^\omega \doteq \mF_a(\bsh) \otimes \bsh_0 \otimes \bsh_{tr}$
on which the operators
$\psi(w)$, $W^\omega(s)$, $W_{tr}^\omega(\vec\bsf)$
act. Given the usual twist $u_\sigma$ (\ref{eq.lem.A.1'}), we have the unitary operators
$W^{\lambda,\omega}(s) \in \mU(\mH^\omega)$ defined as in (\ref{eq.00.02}) up to tensoring by the 
identity of $\bsh_{tr}$.
Setting

\medskip 

\[
V^\lambda(\bsf) \ \doteq \ W^{\lambda,\omega}(\Div \bsf) \, W_{tr}^\omega(-\bsf,{\bs 0})
\ \ \ , \ \ \ 
\bsf \in \mS(\bR^3,\bR^3)
\, ,
\]

\medskip 

\noindent we have that $V^\lambda(\bsf)$ acts non-trivially on all the factors 
of $\mH^\omega$. We find 
\begin{equation}
\label{eq.CG.exp}
\left\{
\begin{array}{l}
[ \psi(w) \, , \, W_{tr}^\omega(\vec\bsf) ] \, = \, 0 \, , \\ \\ 
V^\lambda(\bsf) \, \psi(w) \ = \ \psi(e^{-i\sigma\star \Div \bsf}w) \, V^\lambda(\bsf) \, , \\ \\ 
V^\lambda(-\bs\nabla s_0) \, \psi(w) \ = \ \psi(e^{-is_0}w) \, V^\lambda(-\bs\nabla s_0)\, ,
\end{array}
\right.
\end{equation}

\medskip 

\noindent 
having used (\ref{eq.00.01'}) for the second relation and $-\bs\Delta\sigma = \delta$ 
for the latter.

Let now $\omega_0$ and $\omega_{tr}$ be regular. Then 
$W^\omega(s) = e^{ i ( \phi(s_0) + \dot{\phi}(s_1) ) }$,
$W_{tr}^\omega(\vec\bsf) = e^{ i ( \bsA(\bsf_0) + \dbsA(\bsf_1) ) }$
for fields
$\varphi(s)$, $\phi(s_0) \doteq \varphi(s_0,0)$, $\dot{\phi}(s_1) \doteq \varphi(0,s_1)$, 
$\bsA(\bsf_0)$, $\dbsA(\bsf_1)$, and
\begin{equation}
\label{eq.CG07'}
[ \varphi(s) \, , \, \bsA(\bsf_0) ] \, = \, 
[ \varphi(s) \, , \, \dbsA(\bsf_1) ] \, = \, 0
\, ,
\end{equation}
\begin{equation}
\label{eq.CG07}
[ \phi(s_0) \, , \, \dot\phi(s_1) ] \, = \, i \int s_0 s_1
\ \ \ , \ \ \ 
[ \bsA(\bsf_0) \, , \, \dbsA(\bsf_1) ] \, = \, i \int \bsf_0 \cdot P_{tr} \bsf_1
\, .
\end{equation}
The twist $u_\sigma$ yields as usual the field $\varphi^\lambda$
and consequently we set $\phi^\lambda(s_0) \doteq \varphi^\lambda(s_0,0)$.
We define the electric field
\[
\bsE^\lambda(\bsf) \, \doteq \, \phi^\lambda(\Div \bsf) - \dbsA(\bsf)
\ \ \ , \ \ \ 
\bsf \in \mS(\bR^3,\bR^3)
\, .
\]
With the above definitions, by applying (\ref{eq.00.06}) we find

\medskip 

\begin{equation}
\label{eq.CG08}
[ \bsE^\lambda(\bsf) , \psi(w) ] \ = \
[ \phi^\lambda(\Div\bsf) , \psi(w) ] \ = \
- \psi( \sigma \star \Div\bsf \cdot w )
\, ,
\end{equation}

\medskip

\noindent that yields the rigorous counterpart of (\ref{eq.CG03}). 
Defining the charge density field operators
$\Div \bsE^\lambda(s_0) \doteq - \bsE^\lambda( \bs\nabla s_0)$,
$s_0 \in \mS(\bR^3)$,
we find
\begin{equation}
\label{eq.CG09}
\Div \bsE^\lambda(s_0) \ = \ -\phi^\lambda(\bs\Delta s_0) + \dbsA(\bs\nabla s_0)
\, ;
\end{equation}
keeping in mind that $-\bs\Delta\sigma = \delta$, and applying (\ref{eq.00.18}), we conclude that 

\medskip 

\begin{equation}
\label{eq.CG10}
[ \Div \bsE^\lambda(s_0) \, , \, \psi(w) ] \ = \
- [ \phi^\lambda(\bs\Delta s_0) \, , \, \psi(w) ] \ = \
- \psi( s_0 w )
\, ,
\end{equation}
\begin{equation}
\label{eq.CG10'}
e^{i\Div \bsE^\lambda(s_0)} \, \psi(w) \, e^{-i\Div \bsE^\lambda(s_0)}
\ = \
\psi( e^{-is_0} w )
\, ,
\end{equation}

\medskip 

\noindent obtaining that $\Div \bsE^\lambda$ is a generator of the local gauge
transformations of $\psi$.
Note that $\Div \bsE^\lambda$ and $\psi$ are relatively local 
despite $\bsE^\lambda$ and $\psi$ are not.
If we take a bump function $s_0$ such that $s_0 \restriction \supp(w) = 1$,
then 
\[
[ \Div \bsE^\lambda(s_0) \, , \, \psi(w) ] \ = \ - \psi(w) \, ,
\]
thus, as in the previous example,
the charge density operators are able to detect the electron-positron charge.
On the other hand, $\Div \bsE^\lambda$ and $\bsA$ commute,
\begin{align}
[ \Div \bsE^\lambda(s_0) , \bsA(\bsf) ] & = \ 
[ \dbsA(\bs\nabla s_0) , \bsA(\bsf) ] \ = \ 
-i \int \bs\nabla s_0 \cdot P_{tr} \bsf \ = \ 0 \, .
%
\end{align}
%
%
%
Until now we did not made assumptions on $\omega_{tr}$, aside regularity.
We now pick the state
$\omega_{tr}(W_{tr}(\vec\bsf)) \ \doteq \ e^{-1/4 \| P_{tr}\vec\bsf \|}$,
$P_{tr}\vec\bsf \, \doteq \, P_{tr}(\bsf_0 + i\bsf_1)$;
it is induced by the Fock vector $\Omega_{tr} \in \bsh_{tr} \doteq \mF_s(\bsh^1_{tr})$, 
\[
\bsh^1_{tr} \ \doteq \ 
{\mathrm{span}} \{ P_{tr}\bsv \, , \, \bsv \in \mS(\bR^3,\bC^3) \}
\subset L^2(\bR^3,\bC^3) \, . 
\]
This yields the usual Coulomb gauge for the free field, based on 
creation and annihilation operators
$\bsa_b(P_{tr}\vec\bsf)$, $\bsa_b^*(P_{tr}\vec\bsf)$.
Since both $\bsA(\bsf_0)$ and $\dbsA(\bsf_1)$ are defined by combinations
of $\bsa_b(P_{tr}\vec\bsf)$ and $\bsa_b^*(P_{tr}\vec\bsf)$, 
and since $P_{tr}\bs\nabla s_0 \equiv 0$,
we conclude that the Coulomb condition 
$\bsA(\bs\nabla s_0) = \dbsA(\bs\nabla s_0) = 0$, $s_0 \in \mS(\bR^3)$,
holds, implying 
\[
\Div \bsE^\lambda(s_0) \ = \ -\phi^\lambda(\bs\Delta s_0) \, .
\]

\section{Momenta and Hamiltonian}
\label{sec.H}


In this section we give a sketch of the Hamiltonian of $\varphi^\lambda$
at a quite formal level
{\footnote{
Of course in general Hamiltonians should be understood as quadratic forms \cite[Vol.2, \S X.7]{RS}.
}}.
Our purpose is essentially pedagogic, in that the Hamiltonian, 
that we discuss in momentum space,
gives a hint of the behaviour that $\varphi^\lambda$ should have in a relativistic setting.
For this reason, and since at a fixed time the Dirac current 
is not available \cite[§4.6.1]{Strocchi}, we will not consider terms related to the Dirac field.
%
%
%
To fix ideas, we consider the case of the previous Example 2 so that 
$\sigma$ is the Yukawa potential for some $m > 0$; 
moreover, we pick $\omega \in \mS(\mW)$ as the Fock state.

\medskip 

As we saw in the previous sections $\varphi^\lambda$ commutes with the fermionic 
particle number and stabilizes any Hilbert space of the type (\ref{eq.00.Hilb}).
For the Hamiltonian we shall study accordingly the corresponding restrictions.
The restriction of $\varphi^\lambda$ to the zero fermion subspace $\mH^{\omega,0}$
is the untwisted field $\varphi$, thus we have no interest in it.

\paragraph{One-fermion subspace.}
Given a normalized Dirac spinor $w \in \bsh_+$, 
we consider the restriction of $\varphi^\lambda$ to the Hilbert space
(\ref{eq.00.Hilb}) that here we write
\begin{equation}
\label{eq.hilb}
\mH \ \doteq \ 
{\mathrm{span}} \, 
\{ e^{-i\sigma \star s_0} w \otimes W(s)\Omega_b \, , \, 
s \in \mS \} 
\ \subset \bsh_+ \otimes \mF_s(\bsh_0)
\, .    
\end{equation}
%
%
%
%
Now, the field $\phi^\lambda(s) \doteq \varphi^\lambda (s,0)$,
$s \in \mS(\bR^3)$, is defined by the creation and annihilation operators (\ref{eq.tbf02})
that in $\mH$ have the form
\begin{equation}
\label{eq.tbf02'}
\bsa_{b,\sigma}^\circ(s) \, \doteq \, 
\bI_f \otimes \bsa_b^\circ(s) + \frac{1}{\sqrt{2}} \bmu_{\sigma,s} \otimes \bI_b  
\, ,
\end{equation}
where $\bmu_{\sigma,s}$ is the one-particle operator acting on $\bsh_+$.
The corresponding pointwise-defined objects are
\begin{equation}
\label{eq.tbf02'''}
\bsa_{b,\sigma}^\circ(\bsx) \, \doteq \, 
\bI_f \otimes \bsa_b^\circ(\bsx) + \frac{ \bmu_\sigma(\bsx) }{\sqrt{2}}  \otimes \bI_b 
\, ,
\end{equation}
%
%
%
where
$\bmu_\sigma(\bsx) w(\bsy) \doteq - \sigma(\bsy - \bsx)w(\bsy)$.
%
%
Passing to momentum space, we have that the Fourier transform of $\sigma$ is
$\hat\sigma(\bsk) = 4\pi\varpi(\bsk)^{-2}$, $\varpi(\bsk) \doteq \sqrt{|\bsk|^2 + m^2}$. 
%
%
%
%
%
The creation and annihilation operators are
\begin{equation} 
\label{eq.tbf02'k} 
\bsa_{b,\sigma}^\circ(\hat{s}) \, \doteq \,  
\bI_f \otimes \bsa_b^\circ(\hat{s}) + \frac{1}{\sqrt{2}} \hat{\bmu}_{\sigma,s} \otimes \bI_b 
\ \ , \ \ 
\hat{\bmu}_{\sigma,s}\hat{w}  \, \doteq \, 
\widehat{\bmu_{\sigma,s}w}  \, = \, 
-(\hat\sigma \hat{s}_0) \star \hat{w}
\, ,
\end{equation} 
with sharp momentum operators
\begin{equation} 
\label{eq.tbf02''} 
\bsa_{b,\sigma}^\circ(\bsk) \, \doteq \,  
\bI_f \otimes \bsa_b^\circ(\bsk) + 
\frac{1}{\sqrt{2}} \hat{\bmu}_\sigma^\circ(\bsk) \otimes \bI_b 
\, ,  
\end{equation} 
\begin{equation} 
\label{eq.tbf02''k} 
\hat{\bmu}_\sigma^*(\bsk)  \wa{w}(\bsk') \ \doteq \ 
- \hat{\sigma}(\bsk) \hat{w}(\bsk' - \bsk) \ = \ 
- \frac{4\pi}{\varpi(\bsk)^2} \, \hat{w}(\bsk' - \bsk) \, .
\end{equation}

\noindent Note that, despite $\hat{\bmu}_{\sigma,s}$ is selfadjoint,
$\hat{\bmu}_\sigma^*(\bsk)$ has adjoint 
$\hat{\bmu}_\sigma(\bsk) = \hat{\bmu}_\sigma^*(-\bsk)$
due to the translation action on $\hat{w}$;
thus we made an arbitrary choice when we defined $\hat{\bmu}_\sigma^*(\bsk)$ 
in (\ref{eq.tbf02''k}) rather than $\hat{\bmu}_\sigma(\bsk)$,
dictated by the wish to obtain, in the following (\ref{eq.Hlambda'}), 
the term $\hat{\bmu}_\sigma^*(\bsk) \bsa_b(\bsk) + \hat{\bmu}_\sigma(\bsk) \bsa^*_b(\bsk)$.
The operators $\hat{\bmu}_\sigma^\circ(\bsk)$ are bounded and, by the estimate
\[
\int | \hat{\bmu}_\sigma(\bsk)  \wa{w}(\bsk') - \hat{\bmu}_\sigma(\bsh) \wa{w}(\bsk') |^2 \, d^3\bsk'
\ \leq \ 
\frac{1}{m^4} \int \left| 
\wa{w}(\bsk + \bsk') - \frac{|\bsk|^2 + m^2}{|\bsh|^2 + m^2} \wa{w}(\bsh + \bsk') 
\right|^2 d^3\bsk' \, ,
\]
depend continuously on $\bsk$ in the strong topology
(provided that $\bsh,\bsk$ vary in an arbitrary compact set). 
Moreover, since they act exclusively on the fermionic factor of $\mH$, 
they commute with the bosonic operators $\bsa_b^\circ(\bsk)$.
Thanks to $\hat{\bmu}_\sigma^\circ(\bsk)$, we have that $\bsa_{b,\sigma}^\circ(\bsk)$ performs a shift 
of the momenta of the fermionic states, besides creating and annihilating bosonic states.
The Hamiltonian 
\begin{equation}
\label{eq.Hlambda}
H^\lambda \, \doteq \, 
\int \varpi(\bsk) \, \bsa^*_{+\sigma}(\bsk) \bsa_{b,\sigma}(\bsk) \, d^3\bsk
\end{equation}
is written explicitly 
%
%
\begin{equation}
\label{eq.Hlambda'}
\int
\varpi(\bsk)
\left( 
\bsa^*_b(\bsk) \bsa_b(\bsk) -
\frac{1}{\sqrt{2}} 
\left( \hat{\bmu}_\sigma^*(\bsk) \bsa_b(\bsk) +
       \hat{\bmu}_\sigma(\bsk) \bsa^*_b(\bsk) 
\right) + 
\frac{1}{2} \hat{\bmu}_\sigma^*(\bsk) \hat{\bmu}_\sigma(\bsk)
\right) \, d^3\bsk \, .
\end{equation}
%
%
%
%
%
%
The previous expression shows that the Hamiltonian density splits into three different terms, say 
$\efH^0$, $\efH^{\bmu\varphi}$ and $\efH^{\bmu}$. 
The first term $\efH^0$ is the free Hamiltonian density, whilst $\efH^{\bmu\varphi}$
involves annihilation and creations of bosons besides affecting momenta of fermionic wave functions.
We have
\[
\left\{
\begin{array}{l}
[ \efH^0(\bsk') \, , \, \efH^{\bmu\varphi}(\bsk) ] = 
2^{-1/2}\varpi(\bsk) \left(
\hat{\bmu}_\sigma(\bsk) \bsa^*_b(\bsk) -
\hat{\bmu}_\sigma^*(\bsk) \bsa_b(\bsk) \right) \delta(\bsk' - \bsk) \, , 
\\ \\ 
\left[ \efH^\bmu \, , \, \efH^0 \right] \, = \, 
\left[ \efH^\bmu \, , \, \efH^{\bmu\varphi} \right] \, = \, 0 \, ,
\end{array}
\right.
\]
thus the boson number is not preserved and the terms $\efH^{\bmu\varphi}$, $\efH^{\bmu}$
forming the interacting part of the Hamiltonian commute.
Note that we may write 
\[
\varpi(\bsk) \hat{\bmu}_\sigma^*(\bsk) \, = \, - \frac{4\pi}{\varpi(\bsk)} U_\bsk 
\, ,
\]
where $U_\bsk \in \mU(\bsh_+) \otimes \bI_b$ are the translations by $\bsk$.
Up to these unitaries, $H^\lambda$ is analogous to a Van Hove Hamiltonian 
of the second kind in the sense of \cite[\S 1]{Der03}. Since
%
%
%
\[
\int \frac{1}{\varpi(\bsk)^4} d^3\bsk \, < \, \infty 
\ \ , \ \ 
\int \frac{1}{\varpi(\bsk)^3} d^3\bsk \, = \, 
\int \frac{1}{\varpi(\bsk)^2} d^3\bsk \, = \, 
\infty 
\, ,
\]
we may give a meaning to $H^\lambda$ as a selfadjoint operator
following a route analogous to \cite[\S 1.1, (3)]{Der03},
which requires an ultraviolet renormalization.
We do not deepen this point since it is beyond the scope 
of this section. 

\medskip 

Integrating the Hamiltonian density with the ultraviolet cutoff 
$g(\bsk) d^3\bsk$, $g \in \mS(\bR^{3,*})$,
we get operators $\efH^{\bmu\varphi}_g$, $\efH^\bmu_g$. 
%
%
%
%
In particular,
\[
\efH^\bmu_g  \, = \, 
\int \frac{8\pi^2}{\varpi(\bsk)^3} g(\bsk) \, d^3\bsk 
\]
is a multiple of the identity. 
Defining $H^\lambda_g \doteq H^0 + \efH^{\bmu\varphi}_g + \efH^\bmu_g$,
where $H^0$ is the free Hamiltonian,
it is instructive to compute the state
\begin{equation}
\label{eq.H1}
H^\lambda_g ( \hat{w} \otimes \Omega_b ) \, = \, 
%
%
  \int \frac{4\pi g(\bsk)}{\varpi(\bsk)} \, \hat{w}_\bsk \otimes |\bsk\rb \, d^3\bsk
+ \int \frac{8\pi^2}{\varpi(\bsk)^3} g(\bsk) \, d^3\bsk \cdot \hat{w} \otimes \Omega_b \, ,
\end{equation}
where $| \bsk \rb \doteq \bsa^*_b(\bsk)\Omega_b$ and 
$\hat{w}_\bsk(\bsk') \doteq \hat{w}(\bsk' + \bsk)$. 
The first term at r.h.s. is given by $\efH^{\bmu\varphi}_g( \hat{w} \otimes \Omega_b )$ 
and is a superposition of the improper states $\hat{w}_\bsk \otimes |\bsk\rb$.
The translation by $\bsk$ in $\hat{w}_\bsk$ indicates that the momentum
of $\hat{w}$ has been affected by the creation of the improper bosonic state $|\bsk\rb$.
To better illustrate this point, let us take $g$ a smooth approximant of the Dirac delta 
on $\bsk_\bullet \in \bR^3$ (with small compact support around $\bsk_\bullet$) 
and $\hat{w}$ with support in a small neighbourhood of $\bsk_e \in \bR^3$:
to emphasize the almost sharp momentum, we write 
$| \bsk_\bullet \rb_\infty = \bsa^*_b(g)\Omega_b$ and $\hat{w} = | \bsk_e \rb_\infty$.
Now, the support of $\hat{w}_{\bsk_\bullet}$ lies in a small neighbourhood of $\bsk_e - \bsk_\bullet$,
thus we write $\hat{w}_{\bsk_\bullet} = | \bsk_e - \bsk_\bullet \rb_\infty$. 
Treating $g$ as a Dirac delta we find the approximation
\begin{equation}
\label{eq.H2}
H^\lambda_g \, ( | \bsk_e \rb_\infty \otimes \Omega_b) 
\, \simeq \, 
\frac{4\pi}{\varpi(\bsk_\bullet)} \cdot | \bsk_e - \bsk_\bullet \rb_\infty \otimes | \bsk_\bullet \rb_\infty + 
 \frac{8 \pi^2}{\varpi(\bsk_\bullet)^3} \cdot | \bsk_e \rb_\infty \otimes \Omega_b \, ,
\end{equation}
making evident that $| \bsk_e \rb_\infty$ has lost a momentum $\bsk_\bullet$ gained by
the bosonic state $| \bsk_\bullet \rb_\infty$.
The second term $\efH^\bmu_g( \hat{w} \otimes \Omega_b )$ --being defined by a c-number--
mimics the effect that would be obtained by introducing an external potential interacting with the
untwisted field $\phi$.
%
%
Note that $H^\lambda ( \hat{w} \otimes \Omega_b )$ is obtained in the limit $g \to 1$, 
in which both $\efH^{\bmu\varphi}_g$ and $\efH^\bmu_g$ diverge.
This is not surprising as, in particular, the vector valued distribution
$\bsk \mapsto \hat{w}(\bsk' - \bsk) \otimes |\bsk\rb$
must be smeared by a test function to give a meaningful vector. 

\paragraph{Electron-positron subspace.}
Finally we consider the state of $\mW$ induced by the vector $w \doteq w_1 \otimes_a \bar{w}_2$,
$w_1 \in \bsh_+$, $\bar{w}_2 \in \bsh_-$, standing in the two-fermion space and
having zero charge. The GNS Hilbert space is
\[
\mH \ \doteq \ 
{\mathrm{span}} \, 
\{ \iGamma_a(u_{\sigma,s}) w \otimes W(s)\Omega_b \, , \, s \in \mS \} 
\ \subset ( \bsh_+ \otimes_a \bsh_- ) \otimes \mF_s(\bsh_0)
\, .
\]
The creation and annihilation operators are given by (\ref{eq.tbf02}) where
\begin{align*}
\bmu_\sigma^a(s) w(\bsy_1 , \bar\bsy_2) & = \ 
\left( \bmu_{\sigma,s}w_1 \otimes_a w_2 + w_1 \otimes_a \bmu_{\sigma,s} w_2 \right) (\bsy_1 , \bar\bsy_2)
\\ & = \ 
\int 
( \sigma( \bar\bsy_2 - \bsx ) - \sigma( \bsy_1 - \bsx )) s_0(\bsx) \, d^3\bsx 
\cdot w_{sym}(\bsy_1,\bar\bsy_2)
\, , 
\end{align*}
\[
w_{sym}(\bsy_1 , \bar\bsy_2) \ \doteq \
w_1(\bsy_1) w_2(\bar\bsy_2) + w_1(\bar\bsy_2) w_2(\bsy_1) 
\]
(note that $w_{sym}$ is not in the fermionic Fock state, and becomes antisymmetric
only after multiplying by the $\sigma$-factors).
The pointwise-defined operators are as in (\ref{eq.tbf02'''}) with
\[
\bmu_\sigma^a(\bsx)w(\bsy_1 , \bar\bsy_2) \ = \ 
( \sigma( \bar\bsy_2 - \bsx ) - \sigma( \bsy_1 - \bsx ) ) \, w_{sym}(\bsy_1,\bar\bsy_2)
\]
in place of $\bmu_\sigma(\bsx)$. Analogously, the Fourier transforms are 
as in (\ref{eq.tbf02'k}) with 
\[
\hat{\bmu}_{\sigma,s}^- \hat{w} \ \doteq \ 
( \hat{\sigma}_2 \hat{s}_0 - \hat{\sigma}_1 \hat{s}_0 ) \star \hat{w}_{sym}
\ \ \ , \ \ \ 
\hat{\sigma}_i(\bsk_i) \doteq \int \sigma(\bsy_i) e^{-i \bsk_i \bsy_i} d^3\bsy_i
\, ,
\]
and (non-selfadjoint) operators in momentum space defined by the terms
\begin{align*}
\hat{\bmu}_\sigma^-(\bsk_1,\bar\bsk_2) \hat{w}(\bsk'_1 , \bar\bsk'_2) 
\ = 
\ \ \ \ \ \ \ \ \ \ \ \ \ \ \ \ \ \ \ \ \ \ \ \ \ \ \ \ \ \ \ \ \ \ \ \ \ \ \ 
\\   
\frac{4\pi}{\varpi(\bar\bsk_2)^2} 
\left( \hat{w}_1(\bsk'_1) \hat{w}_2(\bar{\bsk}'_2 + \bar{\bsk}_2 ) + 
\hat{w}_1(\bar{\bsk}'_2 + \bar{\bsk}_2 ) \hat{w}_2(\bsk'_1) \right)
+ \\ -
\frac{4\pi}{\varpi(\bsk_1)^2}
\left( \hat{w}_1(\bsk'_1 + \bsk_1 ) \hat{w}_2(\bar\bsk'_2) + 
\hat{w}_1(\bar\bsk'_2) \hat{w}_2(\bsk'_1 + \bsk_1 ) \right) .
\end{align*}
Again, a shift of momenta of the fermionic states appears, 
alternatively on the electron and positron components.
The rest of the discussion is analogous to the one-fermion case,
thus we do not go into details.

%

\section{Conclusions}
\label{sec.concl}

In the present paper we presented a mathematically rigorous construction
of fixed-time models in which the commutation relations (\ref{eq.intro.2}) hold. 
The input is given by a free Dirac field $\psi$, a free bosonic field $\varphi$,
and a distribution $\sigma$ encoding the obstruction for them to commute. 
The output includes a new bosonic field $\varphi^\lambda$ 
acting non-trivially on the fermion subspace of the model and,
when the support of $\sigma$ has non-empty interior, not relatively local to $\psi$.

\medskip 

The so-obtained models are able to reproduce well-known commutation relations,
as those of the interacting Coulomb gauge at equal times. 
An interesting point is that, provided that $\sigma$ is fundamental solution
of a differential operator $P_\partial$, one automatically gets that 
$P_\partial \varphi^\lambda$ is local to $\psi$ and, more precisely, 
generator of the local gauge transformations of $\psi$.
As a consequence we have that $P_\partial \varphi^\lambda$ comes related to the
electron-positron charge (as a charge density), 
and that fermionic vectors induce non-localized states of the 
Weyl algebra associated with $\varphi$. Yet, restrictions of such states
to the Weyl subalgebra generated by test functions of the type 
$P_\partial s$, $s \in \mS$, are localized on the supports of the 
corresponding fermionic test functions.
The Hamiltonian of the bosonic field exhibits non-trivial interaction 
terms, that perform creations and annihilation of bosonic states and  
affect momenta of fermionic states.
It is this last property that marks a difference between our models and 
those obtained by adding an external potential to the bosonic field: 
in the latter case, even if the bosonic particle number is not preserved
there are no modifications of fermionic states.
This holds in particular for the models \cite{BCRV19,BCRV22,BCRV23},
in which the charged (classical) field is not affected by 
the bosonic (electromagnetic) field but induces localized automorphisms
on the associated C*-algebra. In this sense, our model Example 3 is
complementary to those of the above-mentioned references
{\footnote{The author is indebted to D. Buchholz for useful comments on this point.}}.
%

\medskip 

These properties make desirable to investigate the relativistic
counterparts of our class of models. 
The strategy is to consider twisting operators on the space of elementary
excitations of a relativistic charged field, rather than work with the expression 
$\phi^\lambda_{rel}(\bsx,t) \doteq e^{itH^\lambda} \phi^\lambda(\bsx) e^{-itH^\lambda}$.
Once that the relativistic models are constructed
the first point to discuss is locality, because physically interesting 
distributions (Yukawa and Coulomb potentials), as well as their relativistic counterpart, 
have a non-trivial support and hence produce non-local models. 
We will approach this problem by studying locality on the algebras generated 
by observable quantities. 
A further aspect to be explored is the analysis of the shift induced by $\varphi^\lambda$
on the supports of charged states in momentum space, that acquires a new significance
in a relativistic setting.
%
%
These topics are the subject of a work in progress \cite{Vas3}.


\newpage  
{\small

}

\end{document}